\newtheorem{theorem}{Theorem}
\newtheorem{problem}{Problem}
\newtheorem{lemma}[theorem]{Lemma}
\theoremstyle{definition} 
\newtheorem{remark}{Remark}
\newtheorem{example}{Example}
\newtheorem{definition}{Definition}
\theoremstyle{remark}
\newcommand{\F}{\mathbb{F}}
\newcommand{\C}{{\mathcal{C}}}
\newcommand{\D}{{\mathcal{D}}}
\newcommand{\SSS}{{\mathcal{S}}}
\newcommand{\diag}{{\mathrm{diag}}}
\newcommand{\GRS}{{\mathrm{GRS}}}
\newcommand{\supp}{{\mathrm{supp}}}
\newcommand{\uuu}{{{\mathbf{u}}}}
\newcommand{\vvv}{{{\mathbf{v}}}}
\newcommand{\wt}{{{\rm{wt}}}}
\newcommand{\w}{{\omega}}
\newcommand{\ignore}[1]{}
\newcommand{\rmnum}[1]{\romannumeral #1}
\newcommand{\Rmnum}[1]{\expandafter\@slowromancap\romannumeral #1@}
\begin{document}

\title{Unique Decoding of Extended Subcodes of GRS Codes Using Error-Correcting Pairs}
\author{Yang Li, Zhenliang Lu, San Ling, Shixin Zhu, Kwok Yan Lam 
\thanks{
    This research is supported by the Nanyang Technological University Research under Grant No. 04INS000047C230GRT01, 
    the National Natural Science Foundation of China under Grant Nos. U21A20428 and 12171134, and   
    the National Research Foundation, Singapore and Infocomm Media Development Authority 
    under its Trust Tech Funding Initiative. 
    Any opinions, findings and conclusions or recommendations expressed in this material are those of the authors 
    and do not reflect the views of National Research Foundation, Singapore and Infocomm Media Development Authority. 
    {\em (Corresponding author: Zhenliang Lu)}}
\thanks{Yang Li is with the School of Physical and Mathematical Sciences, 
Nanyang Technological University, 21 Nanyang Link, Singapore 637371, Singapore
(email: yanglimath@163.com).} 
\thanks{Zhenliang Lu is with the Department of Computer Science, City University of Hong Kong, 
Hong Kong, China (email: zhenliang.lu@cityu.edu.hk).}
\thanks{San Ling is with the School of Physical and Mathematical Sciences, 
Nanyang Technological University, 21 Nanyang Link, Singapore 637371, Singapore
(email: lingsan@ntu.edu.sg). 
He is also with VinUniversity, Vinhomes Ocean Park, Gia Lam, Hanoi 100000, Vietnam (email: ling.s@vinuni.edu.vn).} 
\thanks{Shixin Zhu is with the School of Mathematics, Hefei University of Technology, Hefei 230601, China 
(email: zhushixin@hfut.edu.cn).}
\thanks{Kwok Yan Lam is with the Digital Trust Centre,
Nanyang Technological University, 50 Nanyang Drive, Singapore 639798, Singapore 
(email: kwokyan.lam@ntu.edu.sg).}
}



\maketitle
\begin{abstract}
  Extended subcodes of generalized Reed-Solomon (ESGRS) codes are a class of linear codes where each code is either  
  a non-GRS maximum distance separable (MDS) code or a near MDS (NMDS) code.  
  They have important applications in distributed storage systems and cryptography. 
  While many algebraic properties and explicit constructions of ESGRS codes 
  have been well studied in the literature, their decoding has been unexplored. 
  In this paper, we focus on their unique decoding problems in terms of $\ell$-error-correcting pairs 
  ($\ell$-ECPs). 
  We determine the existence and specific forms of their $\ell$-ECPs by utilizing GRS codes and their variants.   
  We further present an explicit decoding algorithm for ESGRS codes based on these $\ell$-ECPs, 
  which can uniquely correct up to $\ell$ errors in polynomial time, with $\ell$ about half of the minimum distance. 
  Compared with other potential decoding approaches, our algorithm is more efficient in certain scenarios. 
  Some concrete examples are also given to support these results.
\end{abstract}

\begin{IEEEkeywords}
    Extended subcode of GRS code, unique decoding, error-correcting pair, non-GRS MDS code, NMDS code.
\end{IEEEkeywords}

\section{Introduction}\label{sec.introduction}

Maximum distance separable (MDS) and near MDS codes are two special types of codes in coding theory, 
which have recently garnered significant attention due to their elegant algebraic structures 
and wide-ranging applications in distributed storage systems \cite{CHL2011}, random error channels \cite{ST2013}, 
informed source and index coding problems \cite{TR2018}, and secret sharing schemes \cite{SV2018, ZWXLQY2009}.
Formally, an $[n,k,d]_q$ {\em linear code} $\C$ is a $k$-dimensional linear subspace of $\F_q^n$ 
with minimum distance $d$, and it satisfies the {\em Singleton bound}: $d\leq n-k+1$ \cite{HP2003},   
where $\F_q$ is the finite field of size $q$, $\F_q^n$ is the $n$-dimensional vector space, and $q=p^m$ is a prime power. 
A code $\C$ achieving this Singleton bound, $i.e.$, $d = n - k + 1$, is called a {\em maximum distance separable (MDS)} code. 
If $d = n - k$, then $\C$ is referred to as an {\em almost MDS (AMDS)} code. 
Moreover, a code $\C$ is called a {\em near MDS (NMDS)} code if both $\C$ and $\C^{\perp}$ are AMDS codes, where $\C^{\perp}$ is the dual code of $\C$ with respect to a certain inner product.

\subsection{Linear codes where each code is either non-GRS MDS or NMDS and decoding problems}

Generalized Reed-Solomon (GRS) codes are a special class of MDS codes that 
have garnered a lot of attention due to their highly efficient encoding and decoding algorithms. 
Moreover, $[n,k,n-k+1]_q$ GRS codes satisfy $n \leq q$, and together with the MDS conjecture $n \leq q+1$ 
in general except for two special cases \cite{HP2003}, 
they realize nearly all possible parameter sets of MDS codes, except for the case $n > q$.
However, it is still very interesting and vital to construct 
MDS codes that are not monomially equivalent to GRS codes, referred to as {\em non-GRS MDS codes}. 
These non-GRS MDS codes are important in classifying MDS codes, and are known to exhibit significant potential in resisting cryptographic attacks such as the Sidelnikov–Shestakov and Wieschebrink attacks, 
unlike GRS codes \cite{LR2020, BBPR2018}.
Consequently, it is of particular interest to construct new families of linear codes in which each code is either a non-GRS MDS code or an NMDS code. 

There has been a substantial body of work 
\cite{LZS2025,WDC2023,HL2024,HYNL2021,HF2023,HYN2023,ZL2024,HZ2024,SYLH2022,SD2023,SDC2024,LSZ2025,C2024,ZZ2025,JMXZ2024,LCCN2025,WHLD2024} 
focusing on the construction of new families of codes that include both MDS and NMDS codes. 
In particular, the (+)-twisted generalized Reed-Solomon ((+)-TGRS) codes in \cite{HYNL2021}, 
the ($\ast$)-TGRS codes in \cite{HYN2023}, the (+)-extended TGRS ((+)-ETGRS) codes in \cite{ZL2024}, 
and the extended subcodes of GRS (ESGRS) codes in \cite{LSZ2025} coincide with examples of linear code families where each code is either a non-GRS MDS code or an NMDS code.

As Huffman and Pless emphasized in \cite{HP2003}, finding efficient (fast) decoding algorithms 
is a major area of research in coding theory because of their practical applications. 
There are many different decoding approaches, such as unique decoding and non-unique decoding, which can be classified according to their decoding outputs.
Compared with non-unique decoding, unique decoding guarantees both correctness and unambiguity of the decoding output within its designed decoding radius \cite{G2001}. 
Specifically, when the number of errors does not exceed half of the minimum distance, 
unique decoding deterministically recovers the transmitted codeword and excludes any decoding ambiguity. 
This property is particularly crucial in cryptographic and security-sensitive applications, 
where multiple candidate outputs are unacceptable \cite{M1978, S2017}. 
Therefore, in this paper, we focus on the unique decoding of such families of linear codes 
where each code is either non-GRS MDS or NMDS. 
For convenience, unless stated otherwise, all subsequent references to ``decoding" refer to {\em unique decoding}.

\subsection{Our motivations for unique decoding ESGRS codes based on ECPs}

{\it Of the codes in the previous subsection, except for the ESGRS codes, 
unique decoding algorithms have been studied for the other three families of linear codes.} 
Let $q$ be the alphabet size and $n$ be the code length with $n\leq q$. 
Specifically, Beelen $et~al.$ \cite{BPN2017} initially proposed a brute-force decoding method 
based on the established decoding algorithm of GRS codes when introducing TGRS codes with time complexity $O(q^t n^2)$ \cite{SYJL2024}, 
where $t$ is the number of twists. A simpler version was later presented in \cite{BPR2022}. 
    Although these decoding approaches are theoretically applicable for decoding (+)-TGRS and ($\ast$)-TGRS codes, 
    they are not practical in real-world scenarios. 
    {As stated in \cite{BPN2017} and \cite{BPR2022}, the former suffers from unexpected decoding complexity, 
    while the latter lacks a rigorous proof ensuring that decoding works 
    for any error vector up to the maximal decoding radius.}

    Additionally, Jia $et~al.$ \cite{JYS2025}, Sui $et~al.$ \cite{SY2023}, and Sun $et~al.$ \cite{SYJL2024} 
    applied the Berlekamp–Massey algorithm for GRS codes and the extended Euclidean algorithm to propose efficient 
    decoding algorithms for the dual codes of MDS (+)-TGRS and MDS ($\ast$)-TGRS codes. 
    Wang $et~al.$ \cite{WLL2025} further extended it to the NMDS cases.    
    According to \cite{SYJL2024}, this method has time complexity $O(qn)$ in general. 
    {It is also worth noting from \cite{HYNL2021} and \cite{HYN2023} that the dual codes of 
    (+)-TGRS and ($\ast$)-TGRS codes may not be (+)-TGRS or ($\ast$)-TGRS codes again.} 
Consequently, this method cannot be straightforwardly applied to arbitrary 
(+)-TGRS and ($\ast$)-TGRS codes.

    Alternatively, He $et~al.$ \cite{HL2023DM} explored the forms of $\ell$-error-correcting pairs ($\ell$-ECPs) 
    for general (+)-TGRS and ($\ast$)-TGRS codes.  
    Li $et~al.$ theoretically studied the existence of $\ell$-ECPs for MDS (+)-ETGRS codes in \cite{LZS2025}   
    and Li $et~al.$ provided explicit constructions of $\ell$-ECPs for NMDS (+)-ETGRS codes in \cite{LELL2025}, 
    where $\ell$ is approximately half of the minimum distance.  
    It can be observed that the resulting decoding algorithms can uniquely correct up to $\ell$ errors in 
    polynomial time on $n$, independent of the alphabet size $q$, 
    thereby making them much faster than other known approaches in certain scenarios.

From the above discussions, we can immediately summarize the following three important facts: 
\begin{itemize}
    \item{\it Fact 1:} The current known families of linear codes where each code is either a non-GRS MDS code or an NMDS code  
    mainly stem from TGRS codes and their extended codes.

    \item{\it Fact 2:}  For the aforementioned Fact 1 codes, decoding algorithms based on ECPs appear 
    to be more practical and versatile than other known approaches.

    \item{\it Fact 3:} Although the algebraic properties of ESGRS codes have been extensively studied in \cite{LSZ2025}, 
    their decoding problems remain unexplored. In particular, it was also proved in \cite{LSZ2025} that an ESGRS code  
is generally not monomially equivalent to a TGRS code or its extended code, due to different weight distributions. 
\end{itemize}

The concept of ECP was introduced 
by K\"otter \cite{K1992} and Pellikaan \cite{P1992} in 1992 (see Definition \ref{def.ECP}).  
It has also been further demonstrated that ECPs exist for a wide range of linear codes, 
including cyclic codes, GRS codes, TGRS codes with single or double twists, 
lengthened GRS codes, Goppa codes, alternant codes, and algebraic geometry codes 
\cite{DK1994,K1992,P1992,P1996,HL2023DM,HL2024,XL2025,LZS2025}.
In particular, if a linear code possesses an $\ell$-ECP, this framework enables a universal algebraic decoding  
capable of uniquely correcting up to $\ell$ errors in polynomial time, where the specific time complexity depends on 
the related extension degree of finite fields \cite{P1996,DK1994}. 
Furthermore, Pellikaan $et~al.$ explored the application of ECPs in the 
McEliece public-key cryptosystem for quantum-resistant cryptography \cite{PM2017}. 
In consideration of the above facts and applications, an important and intriguing problem naturally arises:

\begin{problem}\label{prob2}
How can we decode ESGRS codes by leveraging their ECPs? 
More specifically, is it possible to develop an efficient decoding algorithm for ESGRS codes based on their ECPs?
\end{problem}

\subsection{Our contributions}


We answer the aforementioned {\bf Problem \ref{prob2}} affirmatively,
and our main contributions can be summarized as follows.  

\begin{itemize}
    \item [\rm 1)] 
    We use $\C_k(\SSS,{\bf v},\infty)$ to denote an $[n+1,k]_q$ ESGRS code 
    (see Definition \ref{def.codes} for more details). 
    Let $\ell=\left \lfloor \frac{d(\C_k(\SSS,{\bf v},\infty))-1}{2} \right \rfloor$, 
    where  $d(\C_k(\SSS,{\bf v},\infty))$ denotes the minimum distance of 
    the ESGRS code $\C_k(\SSS,{\bf v},\infty)$. 
    Utilizing structural properties, we identify 
    ESGRS codes as subcodes of certain GRS codes and EGRS codes 
    in Lemma \ref{lem.nested_relationship}. 
    Building on this, 
    we prove in Theorem \ref{th.ECP_of_C_k}.1) that $\C_k(\SSS,{\bf v},\infty)$ has 
    only $(\ell-1)$-ECPs if it is an $[n+1,k,n-k+2]_q$ MDS code and $2\nmid (n-k)$.  
    In this case, the ECP can be determined by two GRS codes. 

    \item [\rm 2)]
    We also show that $\C_k(\SSS,{\bf v},\infty)$ has $\ell$-ECPs 
    in other cases by providing explicit constructions of these ECPs in Theorems \ref{th.ECP_of_C_k}.2) 
    and \ref{th.ECP_of_C_k222}, based on variants of GRS codes. 
    More details on the exact values of $\ell$ and explicit forms of ECPs 
    are summarized in Table \ref{tab.ECP}.  
    
    \item [\rm 3)] We provide an explicit decoding algorithm for $\C_k(\SSS,{\bf v},\infty)$ with time complexity $O(n^3)$ 
    based on their ECPs in Theorem \ref{th.decoding ECP} and Algorithm \ref{alg.1}. 
    We further compare our decoding algorithm with other potential approaches and 
    identify some explicit advantages in Subsection \ref{subsec.complexity}. 
    We also present two specific decoding examples shown in Examples \ref{exam.1} and \ref{exam.2} to illustrate our results. 
\end{itemize}

\medskip
This paper is organized as follows. 
After the introduction, Section \ref{sec2.preliminaries} reviews fundamental notation 
and results on GRS codes, ESGRS codes, and ECPs. 
Section \ref{sec.3} establishes the existence and explicit forms of 
$\ell$-ECPs for ESGRS codes. 
Section \ref{sec.4} presents an explicit decoding algorithm based on their $\ell$-ECPs and further discusses its complexity. 
Finally, Section \ref{sec.concluding remarks} concludes this paper.

\section{Preliminaries}\label{sec2.preliminaries}

In this section, we recall some basic notation and results on 
GRS codes, ESGRS codes, and ECPs.

\subsection{Basic notation}
From now on, we fix the following notation, unless stated otherwise. 
\begin{itemize}
    \item $\F_q$ is the {\em finite field} of size $q$ and $\F_q^n$ is the {\em $n$-dimensional vector space}, where $q=p^m$ is a prime power. 
    \item For an $[n,k,d]_q$ linear code $\C$, we use $\dim(\C)$ and $d(\C)$ to denote its {\em dimension} and {\em minimum distance}, respectively. 
    \item For distinct elements $a_1,\ldots,a_n\in \F_q$, we let $\SSS=\{a_1,\ldots,a_n\}$ and 
    let ${\bf u}=(u_1,\ldots,u_n)\in (\F_q^*)^n$ with 
    $$u_i=\prod_{1\leq j\leq n,~j\neq i}(a_i-a_j)^{-1}~{\rm for}~ 1\leq i\leq n.$$
    

    \item  ${\bf 0}$ (resp. ${\bf 1}$) is an appropriate {\em row} or {\em column vector} of all zeros (resp. ones), 
    $\infty_{k+1}=(0,\ldots,0,1)^T$ is a column vector of length $k+1$, 
    and $\mathbf{v}=(v_1,\ldots,v_n)\in (\F_q^*)^n$ is a generic representation of its element.

    \item For any vector ${\bf x}=(x_1,\ldots,x_n)\in \F_q^n$, 
    $\supp(x)=\{i:~ x_i\neq 0\}$ and $z(x)=\{i:~ x_i=0\}$.
    
    \item For any vectors ${\bf x}=(x_1,\ldots,x_n)\in \F_q^n$ and ${\bf y}=(y_1,\ldots,y_n)\in \F_q^n$, 
    their {\em ordinary inner product} and {\em Schur product} are respectively defined by  
    $$\langle{\bf x}, {\bf y}\rangle=\sum_{i=1}^{n}x_iy_i~{\rm and}~{\bf x} \star {\bf y}= (x_1y_1,\ldots,x_ny_n).$$  
    
    \item For any two linear codes $\C$ and $\D$ with the same length $n$, the {\em dual code} of $\C$ is given by 
    $$\C^{\perp}=\{\mathbf{c}^{\perp}\in \F_q^n:~ \langle\mathbf{c}, \mathbf{c}^{\perp}\rangle=0,~\forall~\mathbf{c}\in \C\}$$
    and the {\em Schur product code} of $\C$ and $\D$ is defined by 
    \begin{align*}
        \mathcal{C}\star \mathcal{D}=\{  (c_1d_1, & \ldots,c_nd_n): \\
         & (c_1,\ldots,c_n)\in \mathcal{C},~(d_1,\ldots,d_n)\in \mathcal{D}\}.
    \end{align*}


    \item $\C A=\{{\bf c}A:~{\bf c}\in \C\}$, where $\C$ is an $[n,k]_q$ linear code and $A$ is an $n\times n$ matrix over $\F_q$.

    \item $\diag(a_1,\ldots,a_n)$ denotes an $n\times n$ {\em diagonal matrix} with diagonal elements $a_1,\ldots,a_n$ over $\F_q$. 
    
    \item $\F_{q}[x]_k=\{f(x)=\sum_{i=0}^{k-1}f_ix^i:~f_i\in \F_q,~\forall~ i\in \{0,\ldots,k-1\}\}$.
    
    \item $\mathcal{V}_k=\{f(x)=\sum_{i=0}^{k-2}f_ix^i+f_kx^k:~f_i\in \F_q,~\forall~ i\in \{0,\ldots,k-2,k\}\}$. 
\end{itemize}

\subsection{GRS codes and ESGRS codes}

As mentioned before, an important class of MDS codes is the so-called generalized Reed-Solomon codes \cite{HP2003}.  
With the above notation, we have the following definition. 

\begin{definition}{\rm (\!\! \cite{HP2003})}
Let $k$ and $n$ be two positive integers satisfying $1\leq k\leq n\leq q$.  
A {\em generalized Reed-Solomon (GRS) code}, denoted by $\GRS_k(\SSS,\mathbf{v})$, is 
an $[n,k,n-k+1]_{q}$ linear code defined by 
$$
\GRS_k(\SSS,\mathbf{v}) =  \{(v_1f(a_1),\ldots,v_nf(a_n)):~ f(x)\in {\mathbb{F}_{q}[x]_{k}}\}.
$$
By adding a coordinate to each codeword of a GRS code, 
we immediately get an $[n+1,k,n-k+2]_q$ MDS code, 
namely, an {\em extended GRS (EGRS) code} 
$$
\GRS_k(\SSS,\mathbf{v},\infty) =  \{(v_1f(a_1),\ldots,v_nf(a_n),f_{k-1}):~ f(x)\in {\mathbb{F}_{q}[x]_{k}}\},
$$
where $f_{k-1}$ is the coefficient of $x^{k-1}$ in $f(x)$.     
\end{definition}

In general, we call $\SSS$ the {\em evaluation-point sequence} of 
$\GRS_k(\SSS,\mathbf{v})$ and $\GRS_k(\SSS,\mathbf{v},\infty)$.
From \cite{HP2003}, we can write generator matrices of $\GRS_k(\SSS,\mathbf{v})$ 
and $\GRS_k(\SSS,\mathbf{v},\infty)$ as 
\begin{align}\label{eq.GRS.generator matrix}
    G_{\GRS_k(\SSS,\mathbf{v})}=\begin{pmatrix}    
        v_1 & \ldots & v_{n} \\ 
        v_1a_1 &   \ldots & v_{n}a_{n} \\
        \vdots &   \ddots & \vdots  \\
        v_1a_1^{k-1} &  \ldots & v_{n}a_{n}^{k-1}  \\
    \end{pmatrix}
    \end{align}
and
\begin{align}\label{eq.EGRS.generator matrix}
    G_{\GRS_k(\SSS,\mathbf{v},\infty)}=\begin{pmatrix}    
        v_1 &   \ldots & v_{n} & 0 \\ 
        v_1a_1 &   \ldots & v_{n}a_{n} & 0 \\
        \vdots &   \ddots & \vdots & \vdots \\
        v_1a_1^{k-1} &  \ldots & v_{n}a_{n}^{k-1} & 1 \\
    \end{pmatrix},
\end{align}
respectively.

\begin{lemma}{\rm (\!\! \cite[Lemma 5]{FF2018} and \cite[Theorem 5.3.3]{HP2003})}\label{lem.GRS duality}
    The following statements hold. 
    \begin{enumerate}
        \item [\rm 1)] $\GRS_k(\SSS,\mathbf{v})^{\perp}=\GRS_{n-k}(\SSS,\mathbf{v}^{-1}\star \uuu)$.

        \item [\rm 2)] $\GRS_k(\SSS,\mathbf{v},\infty)^{\perp}=
        \GRS_{n-k+1}(\SSS,\mathbf{v}^{-1}\star \uuu,\infty)\cdot A$ with $A=\diag(\underbrace{1,\ldots,1}_n,-1)$.
    \end{enumerate}
    
\end{lemma}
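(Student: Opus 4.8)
The plan is to prove both parts by the standard two-step recipe for GRS-type duality: first verify that a generator matrix of the right-hand code is orthogonal to a generator matrix of $\GRS_k(\SSS,\mathbf{v})$ (resp.\ of $\GRS_k(\SSS,\mathbf{v},\infty)$), and then close the argument by a dimension count. The dimensions already match: $\dim\GRS_k(\SSS,\mathbf{v})^{\perp}=n-k=\dim\GRS_{n-k}(\SSS,\mathbf{v}^{-1}\star\uuu)$, while $\dim\GRS_k(\SSS,\mathbf{v},\infty)^{\perp}=n+1-k=\dim\GRS_{n-k+1}(\SSS,\mathbf{v}^{-1}\star\uuu,\infty)$, and right-multiplication by $\diag(1,\ldots,1,-1)$ is invertible and hence preserves dimension. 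So in each part it suffices to show that the product of the two generator matrices is the zero matrix. The engine of the whole computation is a single power-sum identity, which I would record first.

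The key identity is
\[
\sum_{i=1}^{n}u_i a_i^{t}=
\begin{cases}
0, & 0\leq t\leq n-2,\\
1, & t=n-1.
\end{cases}
\]
To obtain it, note that $u_i=\prod_{j\neq i}(a_i-a_j)^{-1}=1/L'(a_i)$ with $L(x)=\prod_{i=1}^{n}(x-a_i)$, and apply Lagrange interpolation to the monomial $x^{t}$ for $t\le n-1$: the expansion $x^{t}=\sum_{i=1}^{n}a_i^{t}\,u_i\prod_{j\neq i}(x-a_j)$ is exact, and comparing coefficients of $x^{n-1}$ on both sides yields the claim (this is precisely the fact invoked in the cited references). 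For Part 1) I would use the generator matrices in \eqref{eq.GRS.generator matrix}: the $(s,t)$-entry of $G_{\GRS_k(\SSS,\mathbf{v})}\,G_{\GRS_{n-k}(\SSS,\mathbf{v}^{-1}\star\uuu)}^{T}$ is $\sum_{i=1}^{n}v_i a_i^{s}\cdot v_i^{-1}u_i a_i^{t}=\sum_{i=1}^{n}u_i a_i^{s+t}$ for $0\le s\le k-1$ and $0\le t\le n-k-1$. Since $s+t\le n-2$, the identity forces every such entry to vanish, so the product is the zero matrix; orthogonality together with the dimension count gives equality.

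For Part 2) the same computation is carried out with the extended generator matrices, the only new ingredient being the bottom-right $\infty$-columns. Writing $H=G_{\GRS_{n-k+1}(\SSS,\mathbf{v}^{-1}\star\uuu,\infty)}\diag(1,\ldots,1,-1)$, right-multiplication by the diagonal matrix negates the last column, so the $(s,t)$-entry of $G_{\GRS_k(\SSS,\mathbf{v},\infty)}H^{T}$ equals $\sum_{i=1}^{n}u_i a_i^{s+t}-\delta_{s,k-1}\delta_{t,n-k}$ for $0\le s\le k-1$ and $0\le t\le n-k$, where the subtracted term comes from pairing the two $\infty$-entries. Now $s+t\le n-1$, and by the identity the first sum equals $1$ exactly when $s+t=n-1$; but under the constraints $s\le k-1$ and $t\le n-k$ the equation $s+t=n-1$ forces $s=k-1$ and $t=n-k$ simultaneously, so the first sum equals $\delta_{s,k-1}\delta_{t,n-k}$ and cancels the correction term. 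Hence $G_{\GRS_k(\SSS,\mathbf{v},\infty)}H^{T}=0$, and the dimension count completes the proof. I expect the main (though modest) obstacle to be exactly this bookkeeping in Part 2: arranging the sign in $\diag(1,\ldots,1,-1)$ to cancel the single surviving power-sum term. This reflects the structural point that once the leading coefficients $f_{k-1}$ and $g_{n-k}$ are moved into the extra coordinate, the degree of a product $f g$ can reach $n-1$ rather than $n-2$, producing exactly the one extra contribution that the diagonal sign is designed to absorb.
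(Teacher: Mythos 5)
Your proof is correct: the power-sum identity $\sum_{i=1}^n u_i a_i^t=\delta_{t,n-1}$ (for $0\le t\le n-1$) follows from Lagrange interpolation exactly as you argue, the entrywise orthogonality computations in both parts are right (including the cancellation of the single $s+t=n-1$ term against the $-1$ from $\diag(1,\ldots,1,-1)$ in Part 2), and the dimension counts close the argument. The paper itself gives no proof of this lemma—it is quoted from \cite[Lemma 5]{FF2018} and \cite[Theorem 5.3.3]{HP2003}—and your argument is essentially the standard one used in those references.
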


\begin{definition}{\rm (\!\! \cite[Definition 1]{LSZ2025})}\label{def.codes}
    Let $n$ and $k$ be two positive integers satisfying $3\leq k\leq n-2\leq q-2$.
    An {\em extended subcode of GRS (ESGRS) code}, denoted by $\C_k(\SSS,{\bf v},\infty)$
        is an $[n+1,k]_q$ linear code 
        \begin{align*}
            \C_k(\SSS,{\bf v},\infty)=
            \{(v_1f(a_1),\ldots,v_nf(a_n),f_{k}): ~f(x)\in \mathcal{V}_k\},  
        \end{align*}
    where 
    $f_k$ is the coefficient of $x^k$ in $f(x)$. 
\end{definition}

The following lemmas recall some basic properties of ESGRS codes obtained in \cite{LSZ2025}.

\begin{lemma}{\rm (\!\! \cite[Theorem \Rmnum{3}.2]{LSZ2025})}\label{lem.parity check matrix}    
    The ESGRS code $\C_k(\SSS,{\bf v},\infty)$ has 
    a generator matrix and a parity-check matrix of the form 
        \begin{align}\label{eq.ESGRS_generator_matrix}
            G_{\C_k(\SSS,{\bf v},\infty)}=\begin{pmatrix}    
        v_1 &  \ldots & v_{n} & 0 \\ 
        v_1a_1 &   \ldots & v_{n}a_{n} & 0 \\
        \vdots &   \ddots & \vdots & \vdots \\
        v_1a_1^{k-2} &   \ldots & v_{n}a_{n}^{k-2} & 0 \\
        v_1a_1^k &  \ldots & v_{n}a_{n}^k & 1 \\
            \end{pmatrix} 
    \end{align}
and
 \begin{align}\label{eq.ESGRS_parity_matrix}
            H_{\C_k(\SSS,{\bf v},\infty)}=\begin{pmatrix}    
                \frac{u_1}{v_1} &   \ldots & \frac{u_n}{v_n} & 0 \\ 
                \frac{u_1}{v_1}a_1    & \ldots & \frac{u_n}{v_n}a_{n} & 0 \\
        \vdots &   \ddots & \vdots & \vdots \\
        \frac{u_1}{v_1}a_1^{n-k-2}  &  \ldots & \frac{u_n}{v_n}a_{n}^{n-k-2} & 0 \\
        \frac{u_1}{v_1}a_1^{n-k-1}  &  \ldots & \frac{u_n}{v_n}a_{n}^{n-k-1} & -1 \\
        \frac{u_1}{v_1}a_1^{n-k}  &    \ldots &  \frac{u_n}{v_n}a_{n}^{n-k} & -\sum_{i=1}^na_i
            \end{pmatrix},
        \end{align}
        respectively. 
\end{lemma}

For a subset $\SSS\subseteq \F_q$, we say that {\em $\SSS$ contains a $k$-zero-sum subset} if 
there exists a nonempty subset $\{a_{i_1},\ldots,a_{i_k} \} \subseteq \SSS$ such that 
$a_{i_1}+\ldots+a_{i_k}=0$. Conversely, we say that {$\SSS$ is {\em $k$-zero-sum free}}.

\begin{lemma}{\rm (\!\! \cite[Theorem \Rmnum{3}.4]{LSZ2025})}\label{lem.must be MDS or NMDS} 
    Any ESGRS code $\C_k(\SSS,{\bf v},\infty)$ is either MDS or NMDS. 
    Furthermore, the following statements hold. 
    \begin{enumerate}
        \item [\rm 1)] $\C_k(\SSS,{\bf v},\infty)$ is an $[n+1,k,n-k+2]_q$ MDS code if and only if $\SSS$ is $k$-zero-sum free. 
        \item [\rm 2)] $\C_k(\SSS,{\bf v},\infty)$ is an $[n+1,k,n-k+1]_q$ NMDS code if and only if $\SSS$ contains a $k$-zero-sum subset. 
    \end{enumerate}
\end{lemma}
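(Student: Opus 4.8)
The plan is to read off the minimum distance of $\C_k(\SSS,{\bf v},\infty)$ directly from its polynomial description, and then to upgrade the resulting ``almost MDS'' conclusion to ``NMDS'' by a short duality argument. Throughout I would write a codeword as $\bc_f=(v_1f(a_1),\ldots,v_nf(a_n),f_k)$ for $f(x)=\sum_{i=0}^{k-2}f_ix^i+f_kx^k\in \mathcal{V}_k$, and use that, since every $v_i\in \F_q^*$, the contribution of the first $n$ coordinates to $\wt(\bc_f)$ is exactly $n$ minus the number of $a_i\in \SSS$ that are roots of $f$.

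First I would split on whether $f_k=0$. If $f\neq {\bf 0}$ but $f_k=0$, then $f$ is a nonzero polynomial of degree at most $k-2$, hence has at most $k-2$ roots, so $\wt(\bc_f)\geq n-(k-2)=n-k+2$. If $f_k\neq 0$, then $\deg f=k$, the last coordinate contributes weight $1$, and $f$ has at most $k$ roots in $\SSS$, so $\wt(\bc_f)\geq (n-k)+1=n-k+1$. This already shows $d\geq n-k+1$ for every nonzero codeword, so by the Singleton bound the code is either MDS ($d=n-k+2$) or AMDS ($d=n-k+1$).

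The crux is to pin down exactly when the value $n-k+1$ is attained. Equality $\wt(\bc_f)=n-k+1$ in the second case forces $f$ to have $k$ distinct roots $a_{i_1},\ldots,a_{i_k}$ in $\SSS$, so $f(x)=f_k\prod_{j=1}^k(x-a_{i_j})$; since $f\in \mathcal{V}_k$ carries no $x^{k-1}$ term and $f_k\neq 0$, expanding this product shows its $x^{k-1}$-coefficient $-f_k(a_{i_1}+\cdots+a_{i_k})$ must vanish, i.e. $\{a_{i_1},\ldots,a_{i_k}\}$ is a $k$-zero-sum subset. Conversely, any $k$-zero-sum subset yields such an $f\in\mathcal{V}_k$ and hence a codeword of weight $n-k+1$. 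Together with the $f_k=0$ case, this gives $d=n-k+2$ precisely when $\SSS$ is $k$-zero-sum free (statement~1)) and $d=n-k+1$ precisely when $\SSS$ contains a $k$-zero-sum subset.

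It remains to promote the AMDS case to NMDS, which I regard as the main obstacle, since the direct distance computation only controls $\C_k(\SSS,{\bf v},\infty)$ itself. For this I would read the dual code off the parity-check matrix of Lemma~\ref{lem.parity check matrix}, parametrizing a dual codeword by a polynomial $g\in \F_q[x]_{n-k+1}$ with $\deg g\leq n-k$: its first $n$ coordinates are $\frac{u_i}{v_i}g(a_i)$, so a nonzero $g$ (equivalently a nonzero dual codeword, since $\deg g<n$) has at most $n-k$ roots in $\SSS$ and thus weight at least $n-(n-k)=k$, giving $d(\C_k(\SSS,{\bf v},\infty)^{\perp})\geq k$. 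Since a code is MDS if and only if its dual is MDS, whenever $\SSS$ contains a $k$-zero-sum subset the code is not MDS, hence neither is its dual; as $\C_k(\SSS,{\bf v},\infty)^{\perp}$ has length $n+1$ and dimension $n-k+1$, its Singleton bound then forces $d(\C_k(\SSS,{\bf v},\infty)^{\perp})\leq k$, whence it equals $k$ and the dual is AMDS. Therefore $\C_k(\SSS,{\bf v},\infty)$ is NMDS, which settles statement~2) and the overall dichotomy.
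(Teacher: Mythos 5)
Your proof is correct, but there is nothing in the paper to compare it against: the paper imports this lemma verbatim from \cite[Theorem \Rmnum{3}.4]{LSZ2024} and never reproduces a proof, so the statement functions here as a black-box ingredient. Judged on its own merits, your argument is complete. The weight analysis split on $f_k=0$ versus $f_k\neq 0$ is right, and the equality case correctly forces $f=f_k\prod_{j=1}^{k}(x-a_{i_j})$ with vanishing $x^{k-1}$-coefficient, which is precisely the $k$-zero-sum condition; the converse direction (a zero-sum subset produces a weight-$(n-k+1)$ codeword) is also handled. The step you rightly identify as the crux — upgrading AMDS to NMDS — is dispatched efficiently: parametrizing $\C_k(\SSS,{\bf v},\infty)^{\perp}$ by polynomials $g$ of degree at most $n-k$ via the rows of the parity-check matrix of Lemma \ref{lem.parity check matrix} gives $d(\C_k(\SSS,{\bf v},\infty)^{\perp})\geq k$ from root-counting on the first $n$ coordinates alone, and combining this with the standard fact that the dual of a non-MDS code (of dimension strictly between $0$ and the length) is non-MDS pins $d(\C_k(\SSS,{\bf v},\infty)^{\perp})=k$ exactly, i.e., the dual is AMDS. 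One small point worth making explicit if this were written up formally: the map $g\mapsto\left(\frac{u_1}{v_1}g(a_1),\ldots,\frac{u_n}{v_n}g(a_n),\,-g_{n-k-1}-g_{n-k}\sum_{i=1}^{n}a_i\right)$ is injective and lands onto the full dual code because the $n+1-k$ rows of the parity-check matrix are a basis of the dual; you gesture at this with the parenthetical ``equivalently a nonzero dual codeword,'' and the dimension count makes it airtight. No gaps.
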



\begin{lemma}{\rm (\!\! \cite[Theorem \Rmnum{3}.7]{LSZ2025})}\label{lem.non-GRS}
    Any ESGRS code $\C_k(\SSS,{\bf v},\infty)$ is non-GRS. 
\end{lemma}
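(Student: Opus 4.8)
The plan is to use the square-code (Schur square) dimension as a monomial-equivalence invariant. If $\C'$ is monomially equivalent to $\C$, then $\dim(\C'\star\C')=\dim(\C\star\C)$, and every $[n+1,k]_q$ GRS or EGRS code $\C'$ satisfies $\dim(\C'\star\C')=\min(2k-1,n+1)$: for such a code the Schur product of two codewords is the evaluation (with squared multipliers) of a product of two polynomials of degree $\le k-1$, so the square is the evaluation code of all polynomials of degree $\le 2k-2$, possibly with one extended coordinate carrying the top coefficient. Hence it suffices to show, for each extended Han-Zhang code $\C:=\C_k(\SSS,\bv,\infty)$, a strict inequality $\dim(\C\star\C)>\min(2k-1,n+1)$; and when the primal test turns out inconclusive, to run the same test on the dual, which by Lemma~\ref{lem.GRS duality} is again GRS/EGRS whenever $\C$ is.

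First I would treat the low-rate range $n\ge 2k-1$ on $\C$ itself. A codeword is $(v_1f(a_1),\ldots,v_nf(a_n),f_k)$ with $f\in\mathcal{V}_k=\langle 1,x,\ldots,x^{k-2},x^k\rangle$, and the Schur product of two codewords is $(v_1^2 h(a_1),\ldots,v_n^2 h(a_n),h_{2k})$ with $h=ff'$, since the last coordinate $f_k f'_k$ is exactly the coefficient of $x^{2k}$ in $h$. Because products of elements of $\mathcal{V}_k$ span the monomials of degrees $\{0,1,\ldots,2k-2\}\cup\{2k\}$ for $k\ge 3$ --- the degree $2k-1$ being \emph{missing} while $2k$ is \emph{present} --- the square is the evaluation code of the $2k$-dimensional space $\mathcal{W}=\langle 1,\ldots,x^{2k-2},x^{2k}\rangle$ together with the compatible top coordinate. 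A short kernel computation (a polynomial of degree $\le 2k-2$ vanishing at the $n\ge 2k-1$ distinct points $a_i$ must be zero) shows the evaluation map is injective, so $\dim(\C\star\C)=2k>2k-1=\min(2k-1,n+1)$, and $\C$ is non-GRS.

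The hard part is the high-rate range $n\le 2k-2$, where the same computation only yields $\dim(\C\star\C)=n+1=\min(2k-1,n+1)$, leaving the primal test inconclusive. Here I would apply the test to $\C^{\perp}$, which by Lemma~\ref{lem.parity check matrix} consists of the vectors $(w_1 p(a_1),\ldots,w_n p(a_n),\,-p_{n-k-1}-(\sum_i a_i)p_{n-k})$ with $w_i=u_i/v_i$ and $\deg p\le n-k$, and which now lies in the low-rate range. Projecting $\C^{\perp}\star\C^{\perp}$ onto the first $n$ coordinates gives the evaluation code of all products $pp'$, i.e. of all polynomials of degree $\le 2(n-k)$; since $2(n-k)<n$ here, this projection attains the full value $2(n-k)+1$. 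The decisive point is that the extra coordinate is a \emph{quadratic}, not linear, function of $pp'$: the witnesses $p=p'=x^{n-k-1}$ versus $p=x^{n-k},\,p'=x^{n-k-2}$ (both legitimate since $k\le n-2$) give Schur products with identical first $n$ coordinates --- both evaluate $x^{2(n-k)-2}$ --- but last coordinates $1$ and $0$. Their difference is the unit vector $(\mathbf{0},1)$, so the last coordinate contributes a genuine extra dimension, whence $\dim(\C^{\perp}\star\C^{\perp})=2(n-k)+2>2(n-k)+1=\min\bigl(2(n-k)+1,n+1\bigr)$, forcing $\C^{\perp}$, and hence $\C$, to be non-GRS.

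The main obstacle is precisely this high-rate case: the twist in the last column of the parity-check matrix makes the extra coordinate of $\C^{\perp}\star\C^{\perp}$ a bilinear form that does not factor through the product polynomial, so one cannot read the square dimension off a single evaluation code as in the GRS case. Identifying the right pair of witnesses that collapse to the same first-$n$-coordinate data while separating the last coordinate is the key maneuver; everything else reduces to bookkeeping about which monomial degrees appear and to the injectivity of Vandermonde-type evaluation maps, which is routine given $3\le k\le n-2\le q-2$.
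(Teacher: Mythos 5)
Your proposal is correct, but a direct comparison is not really possible here: the paper does not prove this lemma at all --- it is imported verbatim from \cite[Theorem III.7]{LSZ2024} --- so your argument serves as a self-contained substitute rather than a variant of an in-paper proof. The Schur-square dimension count is the standard invariant for such non-GRS statements, and your two computations check out against the paper's conventions. In the primal range $n\ge 2k-1$: products of elements of $\mathcal{V}_k$ indeed have vanishing coefficient at $x^{2k-1}$, the coefficient of $x^{2k}$ in $ff'$ is exactly $f_kf'_k$ (the only index pair summing to $2k$ is $(k,k)$), the realized degrees are $\{0,\ldots,2k-2\}\cup\{2k\}$ precisely because $k\ge 3$, and injectivity of the weighted evaluation map forces $\dim(\C_k(\SSS,{\bf v},\infty)\star\C_k(\SSS,{\bf v},\infty))=2k>2k-1=\min(2k-1,n+1)$. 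In the complementary range $n\le 2k-2$ (where your primal kernel becomes one-dimensional, e.g.\ at $n=2k-2$, so the test is genuinely inconclusive), the dual description from Lemma \ref{lem.parity check matrix} --- codewords $\bigl((u_i/v_i)p(a_i)\bigr)_{i=1}^n$ with last coordinate $-p_{n-k-1}-(\sum_i a_i)p_{n-k}$, $\deg p\le n-k$ --- is correctly stated, your witnesses $p=p'=x^{n-k-1}$ versus $p=x^{n-k}$, $p'=x^{n-k-2}$ (legitimate since $k\le n-2$) produce identical first $n$ coordinates with last coordinates $1$ and $0$, so $({\bf 0},1)$ lies in the span of the square and $\dim(\C^{\perp}\star\C^{\perp})\ge 2(n-k)+2>2(n-k)+1=\min\bigl(2(n+1-k)-1,\,n+1\bigr)$, the last equality using $n\le 2k-2$; since duals of GRS (resp.\ EGRS) codes are again GRS (resp.\ EGRS up to a diagonal monomial matrix) by Lemma \ref{lem.GRS duality}, and duality commutes with monomial equivalence, this rules out the primal code being GRS. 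Two cosmetic points you should make explicit: the paper defines $\C\star\D$ as a set of products, so state that you work with its linear span (the dimension invariant only makes sense for the span); and ``non-GRS'' must be read as ``not monomially equivalent to any GRS or extended GRS code,'' which your invariant does exclude since you verify $\dim(\C'\star\C')=\min(2k-1,n+1)$ for both families, including the saturated case $2k-1>n+1$. Neither point affects correctness.
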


\subsection{ECPs of linear codes}
This subsection reviews some basic definitions of ECPs . 
For any linear code $\C$, a pair of linear codes $(\mathcal{A},\mathcal{B})$ is 
an $\ell$-error-correcting pair ($\ell$-ECP) of $\C$ that can yield a decoding algorithm 
capable of uniquely correcting up to $\ell$ errors in polynomial time. 
From \cite[Corollary 2.15]{P1992}, we have $\ell\leq \lfloor \frac{d(\C)-1}{2} \rfloor$. 
The following is an alternative definition of the $\ell$-ECP.

\begin{definition}{\rm (\!\! \cite{DK1994,P1992})}\label{def.ECP}
    Let $\C$ be an $[n,k]_q$ linear code. 
    Let $\mathcal{A}$ and $\mathcal{B}$ be two linear codes with the same length $n$ over $\F_{q^t}$ for a given $t\in \mathbb{N}$.  
    We call $(\mathcal{A},\mathcal{B})$ an {\em $\ell$-error-correcting pair ($\ell$-ECP) of $\C$} over $\F_{q^t}$  
    if all the following conditions hold:
\begin{enumerate}
\item $\mathcal{A} \star \mathcal{B}\subseteq \C^{\perp}$, {with $\C^{\perp}$ seen as a linear code over $\F_{q^t}$}.
\item $d(\mathcal{B}^{\perp})> \ell$.
\item {$\dim_{\F_{q^t}}(\mathcal{A})> \ell$}.
\item $d(\mathcal{A})+d(\C)> n$.
\end{enumerate}
\end{definition}

The following three lemmas are useful.

\begin{lemma}{\rm (\!\! \cite[Remark 2.1]{HL2024})}\label{lem.subcode_ECP}
    Let $\C_i$ be an $[n,k_i]_q$ linear code for $i=1,2$ with $\C_1\subseteq \C_2$ 
    and $(\mathcal{A}, \mathcal{B})$ be an $\ell$-ECP of $\C_2$. 
    Then the following statements hold. 
    \begin{itemize}
        \item [\rm 1)] For any vector ${\bf x}\in (\F_q^*)^n$, 
        $({\bf x}^{-1}\star \mathcal{A}, {\bf x}\star \mathcal{B})$ is an $\ell$-ECP of $\C_2$. 

        \item [\rm 2)] $(\mathcal{A}, \mathcal{B})$ is an $\ell$-ECP of $\C_1$. 
    \end{itemize} 
\end{lemma}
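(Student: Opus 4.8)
The plan is to verify, for each of the two claimed pairs, the four defining conditions \rmnum{1})--\rmnum{4}) of an $\ell$-ECP from Definition \ref{def.ECP}. Two elementary facts will do all the work. First, Schur multiplication by a vector ${\bf x}\in(\F_q^*)^n$ is a weight-preserving linear bijection (a diagonal/monomial map), so it preserves dimension and minimum distance and, applied to a dual, satisfies the identity $({\bf x}\star\mathcal{B})^{\perp}={\bf x}^{-1}\star\mathcal{B}^{\perp}$. Second, the inclusion $\C_1\subseteq\C_2$ reverses under duality, giving $\C_2^{\perp}\subseteq\C_1^{\perp}$, while it cannot decrease minimum distance, giving $d(\C_1)\geq d(\C_2)$.

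For Part 1), write $\mathcal{A}'={\bf x}^{-1}\star\mathcal{A}$ and $\mathcal{B}'={\bf x}\star\mathcal{B}$. Condition \rmnum{1}) follows from the cancellation ${\bf x}^{-1}\star{\bf x}={\bf 1}$: every element of $\mathcal{A}'\star\mathcal{B}'$ has the form $({\bf x}^{-1}\star\aaa)\star({\bf x}\star\bbb)=\aaa\star\bbb$ with $\aaa\in\mathcal{A}$ and $\bbb\in\mathcal{B}$, so $\mathcal{A}'\star\mathcal{B}'=\mathcal{A}\star\mathcal{B}\subseteq\C_2^{\perp}$. For condition \rmnum{2}), I would first establish $({\bf x}\star\mathcal{B})^{\perp}={\bf x}^{-1}\star\mathcal{B}^{\perp}$ via the computation $\langle\mathbf{d},{\bf x}\star\bbb\rangle=\langle\mathbf{d}\star{\bf x},\bbb\rangle$, which shows that $\mathbf{d}$ annihilates ${\bf x}\star\mathcal{B}$ exactly when $\mathbf{d}\star{\bf x}\in\mathcal{B}^{\perp}$; since scaling by ${\bf x}^{-1}$ is weight-preserving, $d((\mathcal{B}')^{\perp})=d(\mathcal{B}^{\perp})>\ell$. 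Conditions \rmnum{3}) and \rmnum{4}) are then immediate, as the unit scaling preserves both dimension and minimum distance, yielding $\dim(\mathcal{A}')=\dim(\mathcal{A})>\ell$ and $d(\mathcal{A}')+d(\C_2)=d(\mathcal{A})+d(\C_2)>n$.

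For Part 2), all four conditions transfer directly to $\C_1$. Condition \rmnum{1}) uses the dual inclusion, so $\mathcal{A}\star\mathcal{B}\subseteq\C_2^{\perp}\subseteq\C_1^{\perp}$; conditions \rmnum{2}) and \rmnum{3}) concern only $\mathcal{A}$ and $\mathcal{B}$ and are therefore unchanged; and condition \rmnum{4}) uses that a subcode has larger or equal minimum distance, i.e. $d(\C_1)\geq d(\C_2)$, whence $d(\mathcal{A})+d(\C_1)\geq d(\mathcal{A})+d(\C_2)>n$. There is no genuine obstacle in this lemma; the only step that warrants a short verification rather than a one-line appeal is the dual identity $({\bf x}\star\mathcal{B})^{\perp}={\bf x}^{-1}\star\mathcal{B}^{\perp}$ in Part 1), and this reduces to the bilinearity of the standard product together with the associativity and commutativity of the Schur product.
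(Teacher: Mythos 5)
Your proof is correct. Note that the paper itself gives no proof of this lemma at all---it is quoted as an external result, citing \cite[Remark 2.1]{HL2024}---so there is no internal argument to compare against; your direct verification of conditions \rmnum{1})--\rmnum{4}) of Definition \ref{def.ECP} is the natural self-contained argument. The two ingredients you isolate are exactly the right ones: the dual identity $({\bf x}\star\mathcal{B})^{\perp}={\bf x}^{-1}\star\mathcal{B}^{\perp}$, correctly reduced to $\langle\mathbf{d},{\bf x}\star\bbb\rangle=\langle\mathbf{d}\star{\bf x},\bbb\rangle$ together with weight-preservation of unit diagonal scaling, and for Part 2) the facts that $\C_1\subseteq\C_2$ gives $\C_2^{\perp}\subseteq\C_1^{\perp}$ and $d(\C_1)\geq d(\C_2)$, which is what conditions \rmnum{1}) and \rmnum{4}) respectively need.
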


\begin{lemma}{\rm (\!\! \cite[Example 4.2]{lem_ECP} and \cite[Theorem 6.2]{lem_ECP})}\label{lem.GRS_ECP}
    Let $\C$ be an $[n,n-2\ell,2\ell+1]_q$ MDS code, where $\ell$ is a positive integer with $1\leq \ell\leq \frac{n}{2}-1$. 
    Then $\C$ has an $\ell$-ECP $(\mathcal{A},\mathcal{B})$ over $\F_q$ if and only if $\C$ is a GRS code.  
    Moreover, both $\mathcal{A}$ and $\mathcal{B}$ are GRS codes with the same evaluation-point sequence as that of $\C$, 
    and $\mathcal{B}=(\mathcal{A}\star \mathcal{C})^{\perp}$. 
\end{lemma}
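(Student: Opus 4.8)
The plan is to prove the two implications by very different means: the ``if'' direction is an explicit construction using the multiplicative structure of GRS codes, while the ``only if'' direction is the substantive one and reduces to a rigidity theorem for Schur products of MDS codes. Throughout I would use the Schur-product inclusion $\GRS_{k_1}(\SSS,\mathbf{v}_1)\star \GRS_{k_2}(\SSS,\mathbf{v}_2)\subseteq \GRS_{k_1+k_2-1}(\SSS,\mathbf{v}_1\star \mathbf{v}_2)$, which holds with equality as Schur product codes whenever $k_1+k_2-1\le n$, since products of monomials of degrees $<k_1$ and $<k_2$ span every degree up to $k_1+k_2-2$.

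For the ``if'' direction, write $\C=\GRS_{n-2\ell}(\SSS,\mathbf{v})$, so that $\C^{\perp}=\GRS_{2\ell}(\SSS,\mathbf{v}^{-1}\star \mathbf{u})$ by Lemma \ref{lem.GRS duality}.1). I would set $\mathcal{A}=\GRS_{\ell+1}(\SSS,\mathbf{1})$ and $\mathcal{B}=\GRS_{\ell}(\SSS,\mathbf{v}^{-1}\star \mathbf{u})$ and verify the four conditions of Definition \ref{def.ECP} directly: condition \rmnum{1}) follows from the inclusion above, since $(\ell+1)+\ell-1=2\ell$ and $\mathbf{1}\star(\mathbf{v}^{-1}\star \mathbf{u})=\mathbf{v}^{-1}\star \mathbf{u}$; condition \rmnum{2}) holds because $\mathcal{B}^{\perp}=\GRS_{n-\ell}(\SSS,\cdot)$ has minimum distance $\ell+1>\ell$; condition \rmnum{3}) holds because $\dim(\mathcal{A})=\ell+1>\ell$; and condition \rmnum{4}) holds because $d(\mathcal{A})=n-\ell$ and $d(\C)=2\ell+1$ give $d(\mathcal{A})+d(\C)=n+\ell+1>n$. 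Finally, since $(\ell+1)+(n-2\ell)-1=n-\ell\le n$, one computes $\mathcal{A}\star \C=\GRS_{n-\ell}(\SSS,\mathbf{v})$, so that $(\mathcal{A}\star \C)^{\perp}=\GRS_{\ell}(\SSS,\mathbf{v}^{-1}\star \mathbf{u})=\mathcal{B}$ by Lemma \ref{lem.GRS duality}.1), which yields the asserted form.

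For the ``only if'' direction, suppose $(\mathcal{A},\mathcal{B})$ is an $\ell$-ECP of $\C$. First I would normalize the pair: passing to any $(\ell+1)$-dimensional subcode of $\mathcal{A}$ preserves conditions \rmnum{1})--\rmnum{3}) and can only increase $d(\mathcal{A})$, hence preserves \rmnum{4}), so I may assume $\dim(\mathcal{A})=\ell+1$. Applying the Singleton bound to $\mathcal{B}^{\perp}$ in condition \rmnum{2}) gives $\dim(\mathcal{B})\ge \ell$. Since $\mathcal{A}\star \mathcal{B}\subseteq \C^{\perp}$ and $\C^{\perp}$ is an $[n,2\ell,n-2\ell+1]_q$ MDS code, I get $d(\mathcal{A}\star \mathcal{B})\ge d(\C^{\perp})=n-2\ell+1$; comparing this with the product Singleton bound $d(\mathcal{A}\star \mathcal{B})\le n-\dim(\mathcal{A})-\dim(\mathcal{B})+2$ forces $\dim(\mathcal{A})+\dim(\mathcal{B})\le 2\ell+1$, hence $\dim(\mathcal{B})=\ell$ and equality in the product Singleton bound. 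Thus $(\mathcal{A},\mathcal{B})$ is a critical pair attaining this bound, and a dimension count gives $\mathcal{A}\star \mathcal{B}=\C^{\perp}$. The decisive step is then to invoke the structural characterization of such critical pairs (the cited \cite[Theorem 6.2]{lem_ECP}) to conclude that $\mathcal{A}$ and $\mathcal{B}$ are GRS codes sharing a common evaluation-point sequence $\SSS$; consequently $\C^{\perp}=\mathcal{A}\star \mathcal{B}$ is GRS, so $\C$ is GRS, and the Schur-product identities from the ``if'' part give $\mathcal{B}=(\mathcal{A}\star \C)^{\perp}$.

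The main obstacle is precisely this last step. Controlling the Schur-product dimension of two MDS codes is the easy part via the product Singleton bound, but reconstructing a single rational normal curve that simultaneously realizes $\mathcal{A}$ and $\mathcal{B}$ as GRS codes on a common $\SSS$ is exactly the content of the theory of critical (product-MDS) pairs, which is genuinely delicate and is where I would lean entirely on the cited characterization; the range $1\le \ell\le \frac{n}{2}-1$ keeps the dimensions large enough that one avoids the low-dimensional degenerate exceptions, though a handful of such small cases might need to be checked by hand. Everything else---the normalization of $\dim(\mathcal{A})$, the Singleton estimates, and the verification of the four ECP conditions---is routine linear algebra over $\F_q$.
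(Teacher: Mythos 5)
The paper does not prove this lemma at all: it is quoted from \cite[Example 4.2]{lem_ECP} and \cite[Theorem 6.2]{lem_ECP}, so there is no internal argument to compare against. Your reconstruction is, in substance, the argument of that cited source. Your ``if'' direction is exactly the construction of \cite[Example 4.2]{lem_ECP} (and is the same computation this paper redoes later in Theorems~\ref{th.ECP_of_C_k} and~\ref{th.ECP_of_C_k222}); the verification of conditions \rmnum{1})--\rmnum{4}) and of $\mathcal{B}=(\mathcal{A}\star\C)^{\perp}$ is correct, granting the standard convention (used in \cite{lem_ECP}, and implicitly by this paper when it assigns $\mathcal{A}\star\mathcal{B}$ a generator matrix) that $\star$ denotes the linear span of the coordinatewise products. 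Your ``only if'' direction---normalize $\dim(\mathcal{A})=\ell+1$, extract $\dim(\mathcal{B})\geq\ell$ from condition \rmnum{2}) via Singleton, play $d(\mathcal{A}\star\mathcal{B})\geq d(\C^{\perp})=n-2\ell+1$ against the product Singleton bound, then invoke the rigidity theorem for critical pairs---is precisely the reduction underlying \cite[Theorem 6.2]{lem_ECP}, so the proposal is correct modulo the same citation the paper itself relies on.

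Three points would need repair if this were to stand as a self-contained proof. First, $d(\mathcal{A}\star\mathcal{B})\geq n-2\ell+1$ presupposes $\mathcal{A}\star\mathcal{B}\neq\{{\bf 0}\}$; this is not automatic, but follows from condition \rmnum{2}): $d(\mathcal{B}^{\perp})>\ell\geq 1$ rules out weight-one words in $\mathcal{B}^{\perp}$, hence $\mathcal{B}$ has full support, and then every nonzero ${\bf a}\in\mathcal{A}$ multiplies some ${\bf b}\in\mathcal{B}$ to a nonzero vector. Second, your ``dimension count'' asserting $\mathcal{A}\star\mathcal{B}=\C^{\perp}$ \emph{before} the structural theorem is applied tacitly uses the lower bound $\dim(\mathcal{A}\star\mathcal{B})\geq\dim(\mathcal{A})+\dim(\mathcal{B})-1$, which is false for arbitrary linear codes (over $\F_2$, the code $A$ spanned by two disjointly supported vectors satisfies $\dim(A\star A)=2<3$); the clean order is to invoke the critical-pair characterization first, conclude that $\mathcal{A}$ and $\mathcal{B}$ are GRS with a common evaluation-point sequence, and only then compare $\dim(\mathcal{A}\star\mathcal{B})=2\ell=\dim(\C^{\perp})$ to force $\C^{\perp}$, and hence $\C$, to be GRS. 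Third, when $\ell=1$ your reduction yields $\dim(\mathcal{B})=1$, while critical-pair theorems of Mirandola--Z\'emor type require both dimensions to be at least two; this boundary case, which your range $1\leq\ell\leq\frac{n}{2}-1$ does not exclude, must be settled separately, for instance by noting that the dual of an $[n,n-2,3]_q$ MDS code is a $2$-dimensional MDS code whose generator-matrix columns are distinct points of the projective line, so it is an (E)GRS code, and EGRS codes of length at most $q$ coincide with GRS codes (the fact from \cite{Z2022} already used in Lemma~\ref{lem.nested_relationship}). None of these is fatal, but all three are genuinely missing steps in the write-up.
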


\begin{lemma}{\rm (\!\! \cite[Proposition 2.5]{P1996})}\label{lem.GRS_ECP_paraA}
    Let $\C$ be an $[n,n-2\ell,2\ell+1]_q$ MDS code and let $(\mathcal{A},\mathcal{B})$ be an $\ell$-ECP of $\C$. 
    Then $\mathcal{A}$ is an $[n,\ell+1,n-\ell]_q$ MDS code.
\end{lemma}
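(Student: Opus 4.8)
The plan is to read off the four defining conditions of an $\ell$-ECP from Definition \ref{def.ECP} and combine them with the MDS structure of $\C$. Since $\C$ is $[n,n-2\ell,2\ell+1]_q$ MDS, its dual $\C^\perp$ is $[n,2\ell,n-2\ell+1]_q$ MDS. Condition \rmnum{3}) immediately gives the lower bound $\dim(\mathcal{A})\ge\ell+1$, so the whole task reduces to proving the matching upper bound $\dim(\mathcal{A})\le\ell+1$ together with the minimum-distance equality $d(\mathcal{A})=n-\ell$.

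For the dimension upper bound, I would pass from condition \rmnum{1}) to its adjoint form. Using the identity $\langle\mathbf{a}\star\mathbf{b},\mathbf{c}\rangle=\langle\mathbf{a}\star\mathbf{c},\mathbf{b}\rangle$, the containment $\mathcal{A}\star\mathcal{B}\subseteq\C^\perp$ is equivalent to $\mathcal{A}\star\C\subseteq\mathcal{B}^\perp$. On the one hand, condition \rmnum{2}) and the Singleton bound give $\dim(\mathcal{B})\ge d(\mathcal{B}^\perp)-1\ge\ell$, hence $\dim(\mathcal{B}^\perp)\le n-\ell$ and therefore $\dim(\mathcal{A}\star\C)\le n-\ell$. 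On the other hand, I would invoke a lower bound for the dimension of a Schur product: under condition \rmnum{4}), namely $d(\mathcal{A})+d(\C)>n$, the code $\mathcal{A}$ is spread out enough that $\dim(\mathcal{A}\star\C)\ge\dim(\mathcal{A})+\dim(\C)-1=\dim(\mathcal{A})+n-2\ell-1$. Comparing the two estimates yields $\dim(\mathcal{A})\le\ell+1$, and with condition \rmnum{3}) this forces $\dim(\mathcal{A})=\ell+1$. As a by-product all intermediate inequalities become equalities, so that $\dim(\mathcal{B})=\ell$ and $\mathcal{A}\star\C=\mathcal{B}^\perp$.

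It then remains to show $\mathcal{A}$ is MDS, i.e. $d(\mathcal{A})=n-\ell$. The Singleton bound already gives $d(\mathcal{A})\le n-(\ell+1)+1=n-\ell$, so only the reverse inequality is at issue; note that condition \rmnum{4}) alone merely yields $d(\mathcal{A})\ge n-2\ell$, which for $\ell\ge2$ leaves a genuine gap. To close it I would exploit the now-established equality $\mathcal{A}\star\C=\mathcal{B}^\perp$: fixing a minimum-weight codeword $\mathbf{a}\in\mathcal{A}$, the map $\mathbf{c}\mapsto\mathbf{a}\star\mathbf{c}$ sends $\C$ into the words of $\mathcal{B}^\perp$ supported on $\supp(\mathbf{a})$, and reading off weights through $d(\mathcal{B}^\perp)\ge\ell+1$ constrains $|\supp(\mathbf{a})|$ from below, pushing $d(\mathcal{A})$ up to $n-\ell$ and so establishing the MDS property.

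The main obstacle is the Schur-product dimension lower bound $\dim(\mathcal{A}\star\C)\ge\dim(\mathcal{A})+\dim(\C)-1$. Such a bound is false for arbitrary codes — a code $\mathcal{A}$ concentrated on a few coordinates produces a tiny product — so the argument must genuinely use condition \rmnum{4}): the hypothesis $d(\mathcal{A})+d(\C)>n$ prevents this degeneration and makes the product behave like multiplication of low-degree polynomials, as it does for GRS codes. Establishing this bound rigorously (and its minimum-distance refinement used in the last step) is the technical heart of the proof; everything else is bookkeeping with the Singleton bound and the duality of MDS codes.
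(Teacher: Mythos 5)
The paper itself does not prove this lemma; it imports it verbatim from Pellikaan's Proposition 2.5 in \cite{P1996}, so your attempt has to be judged against the argument in that source, which runs in the opposite order from yours. Your proposal has a genuine gap at exactly the place you flag: the Schur-product dimension bound $\dim(\mathcal{A}\star\C)\geq \dim(\mathcal{A})+\dim(\C)-1$ under condition \rmnum{4}) is asserted but never proved, and it is not ``bookkeeping''. Lower bounds of this type are Kneser-type theorems for codes (Mirandola--Z\'emor, 2015), and even in the favorable setting here ($\C$ MDS, $d(\mathcal{A})+d(\C)>n$) an elementary proof is not routine: the natural induction on $\dim(\mathcal{A})$ stalls, because passing to a hyperplane $\mathcal{A}'\subset\mathcal{A}$ only gives $\mathcal{A}'\star\C\subseteq\mathcal{A}\star\C$ with no guarantee of strict growth. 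One can check the bound by hand for $\dim(\mathcal{A})=2$ (using that an MDS code with $\dim<n$ has trivial multiplier stabilizer), but the general case is precisely the hard content you have deferred, so the proof as written is incomplete. A secondary structural defect: your last step invokes the equality $\mathcal{A}\star\C=\mathcal{B}^{\perp}$, which you only obtained as a by-product of the unproven bound, so the distance argument is also routed through the gap.

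The gap is avoidable by reversing your order of quantities, which is how the cited source proceeds: prove the distance bound first, for \emph{every} nonzero $a\in\mathcal{A}$, using only the containment $a\star\C\subseteq\mathcal{B}^{\perp}$ (your adjoint form of condition \rmnum{1})). Condition \rmnum{4}) gives $\wt(a)\geq n-2\ell=\dim(\C)$, so puncturing the MDS code $\C$ to $\supp(a)$ kills no codewords and yields an MDS code of length $\wt(a)$, dimension $n-2\ell$, and minimum distance exactly $\wt(a)-(n-2\ell)+1$. Since $a$ is invertible on its support, $a\star\C$ has the same weight distribution, hence contains a nonzero word of weight $\wt(a)-(n-2\ell)+1$ lying in $\mathcal{B}^{\perp}$; condition \rmnum{2}) then forces $\wt(a)-(n-2\ell)+1\geq \ell+1$, i.e.\ $\wt(a)\geq n-\ell$. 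Thus $d(\mathcal{A})\geq n-\ell$, the Singleton bound gives $\dim(\mathcal{A})\leq \ell+1$, condition \rmnum{3}) gives the reverse inequality, and a final application of Singleton pins $d(\mathcal{A})=n-\ell$, so $\mathcal{A}$ is an $[n,\ell+1,n-\ell]_q$ MDS code. This is essentially your own final step, made precise and promoted to the front; once it leads, no product-dimension lower bound (and no equality $\mathcal{A}\star\C=\mathcal{B}^{\perp}$) is ever needed.
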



\section{The existence and specific forms of ECPs of ESGRS codes}\label{sec.3}
From now on, we explore the answers to {\bf Problem \ref{prob2}}. 
In this section, we study the existence and specific forms of ECPs of 
ESGRS codes in consideration of the MDS and NMDS cases.

\subsection{The case of MDS ESGRS codes}

First of all, we note that the following nested relationships 
between ESGRS codes and (extended) GRS codes.

\begin{lemma}\label{lem.nested_relationship}
    For any $3\leq k\leq k'-1\leq n-2\leq q-3$ and $\gamma \in \F_q\setminus \SSS$, 
    we have  
    $$
    \C_{k}(\SSS,\vvv,\infty)\subseteq \GRS_{k+1}(\SSS,\vvv,\infty)\subseteq \GRS_{k'}(\SSS',\vvv'),
    $$
    where $\SSS'= \{(a_1-\gamma)^{-1},\ldots,(a_n-\gamma)^{-1},0\}\subseteq \F_q$ and 
    $\vvv'=(v_1(a_1-\gamma)^{k},\ldots,v_{n}(a_n-\gamma)^{k},1)\in (\F_q^*)^{n+1}$. 
\end{lemma}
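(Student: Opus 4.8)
The plan is to treat the two inclusions separately. The first, $\C_k(\SSS,\vvv,\infty)\subseteq\GRS_{k+1}(\SSS,\vvv,\infty)$, I expect to be purely definitional: $\mathcal{V}_k$ sits inside $\F_q[x]_{k+1}$ as the subspace of polynomials of degree at most $k$ whose $x^{k-1}$-coefficient vanishes, and the codeword attached to $f\in\mathcal{V}_k$ in $\C_k(\SSS,\vvv,\infty)$, namely $(v_1f(a_1),\ldots,v_nf(a_n),f_k)$, is literally the codeword attached to the same $f$ in $\GRS_{k+1}(\SSS,\vvv,\infty)$, whose last coordinate is the coefficient of $x^{k}$. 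So the first inclusion drops out once the two evaluation formulas are lined up.

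The second inclusion $\GRS_{k+1}(\SSS,\vvv,\infty)\subseteq\GRS_{k'}(\SSS',\vvv')$ is the substantive one, and the idea is to realize it through the M\"obius substitution $x\mapsto(x-\gamma)^{-1}$, which is well-defined at each $a_i$ (as $\gamma\notin\SSS$), sends each $a_i$ to $b_i:=(a_i-\gamma)^{-1}$, and sends the point at infinity to $0$. Concretely, given $f\in\F_q[x]_{k+1}$ with $\deg f\le k$, I would define
$$g(y)=y^{k}f(\gamma+y^{-1})=\sum_{j=0}^{k}f_j\,y^{k-j}(\gamma y+1)^j$$
and verify that the factor $y^{k}$ exactly clears the denominators, so that $g$ is a genuine polynomial of degree at most $k$, whence $g\in\F_q[x]_{k'}$ because $k\le k'-1$. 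Reading off coordinates, the relation $\gamma+b_i^{-1}=a_i$ gives $g(b_i)=(a_i-\gamma)^{-k}f(a_i)$, so that $v_i(a_i-\gamma)^{k}g(b_i)=v_if(a_i)$ for $1\le i\le n$; and since only the $j=k$ summand survives at $y=0$, one gets $g(0)=f_k$. Hence the $\GRS_{k'}(\SSS',\vvv')$-codeword of $g$ is exactly $(v_1f(a_1),\ldots,v_nf(a_n),f_k)$, the $\GRS_{k+1}(\SSS,\vvv,\infty)$-codeword of $f$.

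Before concluding I would record that $\SSS'$ is a legitimate evaluation-point set: the $b_i$ are pairwise distinct (as the $a_i$ are) and all nonzero, so adjoining $0$ produces $n+1$ distinct elements, with $n+1\le q$ following from the hypothesis $n-2\le q-3$. The one step that genuinely needs care is checking that $g$ is polynomial with $\deg g\le k$: this is exactly the assertion that $y^{k}$ is the correct normalization, so that the infinity coordinate $f_k$ reappears as $g(0)$ while the finite coordinates transform without residue. Everything else (expanding $g$, the degree tally, and the constant-term bookkeeping) is mechanical, so I expect this normalization check to be the only real obstacle.
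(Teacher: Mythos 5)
Your proof is correct, and conceptually it travels the same road as the paper: both pass through the substitution $x\mapsto (x-\gamma)^{-1}$, with the same transformed point set $\SSS'$ and multipliers $\vvv'$. The execution differs, though. The paper disposes of the first inclusion by inspecting the generator matrices \eqref{eq.GRS.generator matrix} and \eqref{eq.generator matrix}, and for the second inclusion it computes nothing: it cites \cite{P1996} and \cite[Lemma 4]{Z2022} for the equality $\GRS_{k+1}(\SSS,\vvv,\infty)=\GRS_{k+1}(\SSS',\vvv')$ and then invokes the trivial nesting $\GRS_{k+1}(\SSS',\vvv')\subseteq \GRS_{k'}(\SSS',\vvv')$ coming from $k+1\le k'$. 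You instead prove the key correspondence from scratch: for $f=\sum_{j=0}^{k}f_jx^j$ you form $g(y)=y^{k}f(\gamma+y^{-1})=\sum_{j=0}^{k}f_j\,y^{k-j}(\gamma y+1)^j$, and check $\deg g\le k\le k'-1$, $v_i(a_i-\gamma)^{k}\,g\bigl((a_i-\gamma)^{-1}\bigr)=v_if(a_i)$, and $g(0)=f_k$; this is precisely the content of the lemma the paper quotes. Mathematically the two arguments coincide; your version buys self-containedness (and, since $f\mapsto g$ is a linear bijection of $\F_q[x]_{k+1}$, it even recovers the full equality used by the paper, not just the inclusion you need), while the paper's version is shorter. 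Your bookkeeping of the side conditions --- distinctness and nonvanishing of the elements $(a_i-\gamma)^{-1}$, and $n+1\le q$ from $n-2\le q-3$ --- is exactly what is required, so there is no gap.
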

\begin{proof}
    With \eqref{eq.EGRS.generator matrix} and \eqref{eq.ESGRS_generator_matrix}, 
    it is easily seen that 
    \begin{align}\label{eq.nest111}
        \C_{k}(\SSS,\vvv,\infty)\subseteq \GRS_{k+1}(\SSS,\vvv,\infty).
    \end{align}
    Additionally, for any $\gamma \in \F_q\setminus \SSS$ (such $\gamma$ exists since $n\leq q-1$), 
    it follows from \cite{P1996} that  
    there exists a set $\SSS'=\{a_1',\ldots,a_{n+1}'\}\subseteq \F_q$ and 
    a vector $\vvv'=(v_1',\ldots,v_{n+1}')\in (\F_q^*)^{n+1}$ such that 
    \begin{align}\label{eq.nest222}
        \GRS_{k+1}(\SSS,\vvv,\infty)=\GRS_{k+1}(\SSS',\vvv')\subseteq \GRS_{k'}(\SSS',\vvv')
    \end{align} 
    by taking  
    $$
    a_i'=\left\{
        \begin{array}{ll}
            (a_i-\gamma)^{-1}, & {\rm if}~1\leq i\leq n, \\
            0, & {\rm if}~i=n+1,
        \end{array}
    \right.
    $$
   and 
    $$
    v_i'=\left\{
        \begin{array}{ll}
            v_i(a_i-\gamma)^{k}, & {\rm if}~1\leq i\leq n, \\
            1, & {\rm if}~i=n+1.
        \end{array}
    \right.
    $$
    Then the desired result immediately follows by combining \eqref{eq.nest111} and \eqref{eq.nest222}.
\end{proof}

\begin{remark}\label{rem.SGRS}
Note that the parameters and dual properties of SGRS codes, obtained by removing the last coordinate 
from each codeword of ESGRS codes, have been investigated in \cite{HZ2024,JMXZ2024,LELL2025}. 
Moreover, SGRS codes have been shown to be either MDS or NMDS codes \cite{HZ2024}, and in certain cases, 
they are specifically non-GRS MDS codes \cite{JMXZ2024,LELL2025}. 
From the proof of Lemma~\ref{lem.nested_relationship}, ESGRS codes may coincide with SGRS codes 
when their lengths are less than $q$ and certain additional conditions are satisfied. 
Although this possibility cannot be excluded, it is important to note that the decoding problem 
for SGRS codes also remains open. In fact, if such an equivalence holds, the decoding algorithm 
we derive in the sequel would be directly applicable to SGRS codes as well.
\end{remark}

In the following, we investigate the existence of $\ell$-ECPs of MDS ESGRS codes 
and give explicit forms of these $\ell$-ECPs if they exist.

\begin{theorem}\label{th.ECP_of_C_k}
    Suppose that $\ell=\left \lfloor \frac{d(\C_k(\SSS,{\bf v},\infty))-1}{2} \right \rfloor$. 
    If $\SSS$ is $k$-zero-sum free, then the following statements hold.  
    \begin{enumerate}
        \item [\rm 1)] If $2\nmid (n-k)$, then  $\C_k(\SSS,{\bf v},\infty)$ does not have an $\ell$-ECP.
        Moreover, $\C_k(\SSS,{\bf v},\infty)$ has an $(\ell-1)$-ECP $(\mathcal{A},\mathcal{B})$ over $\F_q$ 
            with 
            $$\mathcal{A}=\GRS_{\ell}(\SSS',{\bf 1})$$
            {\rm and}~
            $$\mathcal{B}=\GRS_{\ell-1}(\SSS',{\bf v}'^{-1}\star \uuu'),$$ 
            where    
            $\SSS'= \{(a_1-\gamma)^{-1},\ldots,(a_n-\gamma)^{-1},0\}\subseteq \F_q$,  
            $\vvv'=(v_1(a_1-\gamma)^{k},\ldots,v_{n}(a_n-\gamma)^{k},1)\in (\F_q^*)^{n+1}$ 
            and $\uuu'=(u_1',\ldots, u_{n+1}')\in (\F_q^*)^{n+1}$ with 
            \begin{align*}
                \begin{split}
                u_i' =  \left\{
                    \begin{array}{ll}
                        (a_i-\gamma)^n\prod_{1\leq j\neq i\leq n} \frac{a_j-\gamma}{a_j-a_i}, & {\rm if}~1\leq i\leq n, \\  
                        (-1)^n \prod_{1\leq j\leq n}(a_j-\gamma), & {\rm if}~i=n+1,
                    \end{array}
                \right. \\
            \end{split}
            \end{align*}
            for any $\gamma\in \F_q\setminus \SSS$.
    
        \item [\rm 2)] If $2\mid (n-k)$, then 
    $\C_k(\SSS,{\bf v},\infty)$ has an $\ell$-ECP $(\mathcal{A},\mathcal{B})$ over $\F_q$ 
        with $$\mathcal{A}=\GRS_{\ell+1}(\SSS,{\bf 1},\infty)$$ {\rm and}~ 
        $$\mathcal{B}=\GRS_{\ell}(\SSS,{\bf v}^{-1}\star \uuu, \infty)\diag(\underbrace{1,\ldots,1}_n,-1).$$
    \end{enumerate}
\end{theorem}
\begin{proof}
    1) Since $\SSS$ is $k$-zero-sum free, it follows from Lemma \ref{lem.must be MDS or NMDS}.1) that $d(\C_k(\SSS,{\bf v},\infty))=n-k+2$. 
    From $2\nmid (n-k)$ and $3\leq k\leq n-2$, we get that  
    $3\leq k\leq n-3$ and 
    $2\leq \ell=\left \lfloor \frac{d(\C_k(\SSS,{\bf v},\infty))-1}{2} \right \rfloor=\frac{n-k+1}{2}\leq \frac{n-2}{2}<\frac{n+1}{2}-1$. 
    Combining Lemmas \ref{lem.must be MDS or NMDS}.1) and \ref{lem.non-GRS}, we immediately get that 
    $\C_{k}(\SSS,{\bf v},\infty)=\C_{n-2\ell+1}(\SSS,{\bf v},\infty)$ 
    is an $[n+1,n-2\ell+1,2\ell+1]_q$ non-GRS MDS code. 
    By Lemma \ref{lem.GRS_ECP}, if $\C_{k}(\SSS,{\bf v},\infty)$ has an $\ell$-ECP, 
    then it is a GRS code, a contradiction. 
    Hence, $\C_{k}(\SSS,{\bf v},\infty)$ does not have an $\ell$-ECP.

    Moreover, when $\SSS=\F_q$, it follows from \cite[Corollary 2.8]{LW2008} 
    that $\SSS$ always contains a $k$-zero-sum subset if $3\leq k\leq n-3$. 
    Therefore, we conclude that $n\leq q-1$ and $\F_q\setminus \SSS\neq \emptyset$,  
    since $\SSS$ is assumed to be $k$-zero-sum free. 
    It then follows from Lemma \ref{lem.nested_relationship} that 
        \begin{align}\label{eq.111}
            \begin{split}
            \C_{k}(\SSS,{\bf v},\infty) & =\C_{n-2\ell+1}(\SSS,{\bf v},\infty) \\
            & \subseteq \GRS_{n-2\ell+2}(\SSS,{\bf v},\infty)\\ 
            & \subseteq \GRS_{n-2\ell+3}(\SSS',{\bf v}'),
            \end{split} 
        \end{align}
    where $\GRS_{n-2\ell+3}(\SSS',{\bf v}')$ is an $[n+1,n-2\ell+3,2\ell-1]_q$ GRS code  
    with $\SSS'= \{(a_1-\gamma)^{-1},\ldots,(a_n-\gamma)^{-1},0\}\subseteq \F_q$,    
    $\vvv'=(v_1(a_1-\gamma)^{k},\ldots,v_{n}(a_n-\gamma)^{k},1)\in (\F_q^*)^{n+1}$, 
    and any $\gamma\in \F_q\setminus \SSS$. 
    Note that $2\ell-1=2(\ell-1)+1$ and $1\leq \ell-1<\frac{n+1}{2}-1$, 
    then it follows from Lemmas \ref{lem.GRS duality}, \ref{lem.GRS_ECP}, and \ref{lem.GRS_ECP_paraA} that 
    $\GRS_{n-2\ell+3}(\SSS',{\bf v}')$ has an $(\ell-1)$-ECP $(\mathcal{A}',\mathcal{B}')$,  
    where  
    $\mathcal{A}'=\GRS_{\ell}(\SSS',{\bf r})$ is an $[n+1,\ell,n-\ell+2]_q$ MDS code 
    for some ${\bf r}\in (\F_q^*)^{n+1}$  and 
    \begin{align*}
           \mathcal{B}' & = (\GRS_{\ell}(\SSS',{\bf r})\star \GRS_{n-2\ell+3}(\SSS',{\bf v}'))^{\perp} \\
              & = \GRS_{n-\ell+2}(\SSS',{\bf r}\star {\bf v}')^{\perp} \\
              & =\GRS_{\ell-1}(\SSS',{\bf r}^{-1}\star {\bf v}^{-1}\star \uuu')
    \end{align*}
    with $\uuu'=(u_1',\ldots,u_{n+1}')$ and 
    \begin{align*}
        u_i'= & \prod_{1\leq j\neq i\leq n+1}(a_i'-a_j')^{-1} \\ 
            = & \left\{
            \begin{array}{ll}
                (a_i'-a_{n+1}')^{-1}\prod_{1\leq j\neq i\leq n}(a_i'-a_j')^{-1}, & {\rm if}~1\leq i\leq n, \\ 
                \prod_{1\leq j\leq n}(a_{n+1}'-a_j')^{-1}, & {\rm if}~i=n+1,
            \end{array}
        \right. \\
        = & \left\{
            \begin{array}{ll}
                (a_i-\gamma)^n\prod_{1\leq j\neq i\leq n}\frac{a_j-\gamma}{a_j-a_i}, & {\rm if}~1\leq i\leq n, \\ 
                (-1)^n \prod_{1\leq j\leq n}(a_j-\gamma), & {\rm if}~i=n+1.
            \end{array}
        \right. 
    \end{align*}
    Taking $\mathcal{A}={\bf r}^{-1}\star \mathcal{A}'$ and $\mathcal{B}={\bf r}\star \mathcal{B}'$,  
    it then turns out from Lemma \ref{lem.subcode_ECP} and \eqref{eq.111} that 
    \begin{align*}
    \left({\bf r}^{-1}\star \mathcal{A}', {\bf r}\star \mathcal{B}'\right) 
                              =\left(\GRS_{\ell}(\SSS',{\bf 1}), 
    \GRS_{\ell-1}(\SSS',{\bf v}'^{-1}\star \uuu') \right)
    \end{align*}
    is an $(\ell-1)$-ECP of $\C_{k}(\SSS,{\bf v},\infty)$ over $\F_q$. 
    This completes the proof of 1). 

    2) In this case, we have that 
    $1\leq \ell=\left \lfloor \frac{d(\C_k(\SSS,{\bf v},\infty))-1}{2} \right \rfloor=\frac{n-k}{2}\leq \frac{n-3}{2}<\frac{n+1}{2}-1$ 
    and $\C_{k}(\SSS,{\bf v},\infty)$ has parameters $[n+1,n-2\ell,2\ell+2]_q$. 
    Since a parity-check matrix of a linear code is a generator matrix of its dual code, 
    it then follows from \eqref{eq.ESGRS_parity_matrix} that 
    $\C_k(\SSS,{\bf v},\infty)^{\perp}$ has a generator matrix of the form 
    \begin{align}\label{eq.ESGRS_generator_matrix of dual code111}
        \begin{pmatrix}    
            \frac{u_1}{v_1}  &  \ldots & \frac{u_n}{v_n} & 0 \\ 
            \frac{u_1}{v_1}a_1    & \ldots & \frac{u_n}{v_n}a_{n} & 0 \\
    \vdots &   \ddots & \vdots & \vdots \\
    \frac{u_1}{v_1}a_1^{2\ell-2}    & \ldots & \frac{u_n}{v_n}a_{n}^{2\ell-2} & 0 \\
    \frac{u_1}{v_1}a_1^{2\ell-1}    & \ldots & \frac{u_n}{v_n}a_{n}^{2\ell-1} & -1 \\
    \frac{u_1}{v_1}a_1^{2\ell}    &  \ldots &  \frac{u_n}{v_n}a_{n}^{2\ell} & -\sum_{i=1}^na_i
        \end{pmatrix}.    
    \end{align}
    Take $\mathcal{A}=\GRS_{\ell+1}(\SSS,{\bf 1},\infty)$ and 
    $\mathcal{B}=\GRS_{n+1-\ell}(\SSS,{\bf v},\infty)^{\perp}$. 
    It follows from Lemma \ref{lem.GRS duality}.2) and \eqref{eq.EGRS.generator matrix} that 
    $\mathcal{B}=\GRS_{\ell}(\SSS,{\bf v}^{-1}\star \uuu,\infty)\diag(\underbrace{1,\ldots,1}_n,-1)$, and 
    $\mathcal{A}$ and $\mathcal{B}$ have generator matrices of the forms 
    \begin{align*}
        G_{\mathcal{A}}=\begin{pmatrix}    
            1 &   \ldots & 1 & 0 \\ 
            a_1  &  \ldots & a_{n} & 0 \\
            \vdots  &  \ddots & \vdots & \vdots \\
            a^{\ell-1}_1  &  \ldots & a^{\ell-1}_{n} & 0 \\
            a^{\ell}_1 &   \ldots & a^{\ell}_{n} & 1
        \end{pmatrix}
    \end{align*} 
    and 
    \begin{align*}
        G_{\mathcal{B}}=\begin{pmatrix}    
            \frac{u_1}{v_1}  &  \ldots & \frac{u_n}{v_n} & 0 \\ 
            \frac{u_1}{v_1}a_1    & \ldots & \frac{u_n}{v_n}a_{n} & 0 \\
    \vdots &   \ddots & \vdots & \vdots \\
    \frac{u_1}{v_1}a_1^{\ell-2}    & \ldots & \frac{u_n}{v_n}a_{n}^{\ell-2} & 0 \\
    \frac{u_1}{v_1}a_1^{\ell-1}    & \ldots & \frac{u_n}{v_n}a_{n}^{\ell-1} & -1 \\
        \end{pmatrix},
    \end{align*}
    respectively.

    It turns out that $\mathcal{A}\star \mathcal{B}$ is an $[n+1,2\ell]_q$ linear code 
    with generator matrix of the form 
    \begin{align*}
        G_{\mathcal{A}\star \mathcal{B}}=\left(
            \begin{array}{ccccc}
                \frac{u_1}{v_1}  &  \ldots & \frac{u_n}{v_n} & 0 \\ 
                \frac{u_1}{v_1}a_1    & \ldots & \frac{u_n}{v_n}a_{n} & 0 \\
        \vdots &   \ddots & \vdots & \vdots \\
        \frac{u_1}{v_1}a_1^{2\ell-2}    & \ldots & \frac{u_n}{v_n}a_{n}^{2\ell-2} & 0 \\
        \frac{u_1}{v_1}a_1^{2\ell-1}    & \ldots & \frac{u_n}{v_n}a_{n}^{2\ell-1} & -1 \\              
            \end{array}
        \right),    
\end{align*}
which, by combining with \eqref{eq.ESGRS_generator_matrix of dual code111}, 
implies that $\mathcal{A}\star \mathcal{B}\subseteq \C_k(\SSS,{\bf v},\infty)^{\perp}$, $i.e.,$ the condition \rmnum{1}) 
in Definition \ref{def.ECP} holds. 
In addition, it can be checked that 
$$d(\mathcal{B}^{\perp})=d(\GRS_{n+1-\ell}(\SSS,{\bf v},\infty))=\ell+1>\ell,$$  
$$\dim(\mathcal{A})=\ell+1>\ell,$$
{\rm and} 
\begin{align*}
d(\mathcal{A})+d(\C_{k}(\SSS,{\bf v},\infty)) & = n+1-(\ell+1)+1+2\ell+2 \\ 
                                              & = n+\ell+3 \\
                                              & > n+1, 
\end{align*}
which implies that the conditions \rmnum{2}), \rmnum{3}) and \rmnum{4}) in Definition \ref{def.ECP} also hold. 
With Definition \ref{def.ECP}, 
we immediately conclude that 
$$
\left(
    \GRS_{\ell+1}(\SSS,{\bf 1},\infty),
\GRS_{\ell}(\SSS,{\bf v}^{-1}\star \uuu,\infty)\diag(\underbrace{1,\ldots,1}_n,-1)
\right)$$ 
is an $\ell$-ECP of $\C_k(\SSS,{\bf v},\infty)$ over $\F_q$. 
This completes the proof of 2). 
\end{proof}

\subsection{The case of NMDS ESGRS codes}

This subsection focuses on NMDS ESGRS codes. 
In this case, the existence of $\ell$-ECPs is always guaranteed. 
The following theorem presents the explicit forms of their $\ell$-ECPs, 
which are constructed based on suitable variants of GRS codes.

\begin{theorem}\label{th.ECP_of_C_k222}
    Suppose that $\ell=\left \lfloor \frac{d(\C_k(\SSS,{\bf v},\infty))-1}{2} \right \rfloor$. 
    If $\SSS$ contains a $k$-zero-sum subset, then the following statements hold. 
    \begin{enumerate}
        \item [\rm 1)] If $2\nmid (n-k)$, then $\C_k(\SSS,{\bf v},\infty)$ has an $\ell$-ECP $(\mathcal{A},\mathcal{B})$ 
        over $\F_q$ with 
        $$\mathcal{A}=\{(f(a_1),\ldots,f(a_n),0):~f(x)\in \F_{q}[x]_{\ell+1}\}$$
        {\rm and}~ 
        $$\mathcal{B}=\GRS_{\ell}(\SSS,{\bf v}^{-1}\star \uuu, \infty).$$

        \item [\rm 2)] If $2\mid (n-k)$, then $\C_k(\SSS,{\bf v},\infty)$ has an $\ell$-ECP $(\mathcal{A},\mathcal{B})$ 
        over $\F_q$ with 
        $$\mathcal{A}=\GRS_{\ell+1}(\SSS,{\bf 1},\infty)$$ 
        {\rm and}~
        $$\mathcal{B}=\GRS_{\ell}(\SSS,{\bf v}^{-1}\star \uuu, \infty)\diag(\underbrace{1,\ldots,1}_n,-1).$$
    \end{enumerate}
\end{theorem}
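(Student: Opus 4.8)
The plan is to verify directly the four defining conditions \rmnum{1})--\rmnum{4}) of an $\ell$-ECP from Definition \ref{def.ECP} in each case, working entirely from explicit generator matrices. Since $\SSS$ contains a $k$-zero-sum subset, Lemma \ref{lem.must be MDS or NMDS}.2) gives that $\C_k(\SSS,{\bf v},\infty)$ is an $[n+1,k,n-k+1]_q$ NMDS code, so $d(\C_k(\SSS,{\bf v},\infty))=n-k+1$ throughout. The observation driving both cases is that the parity-check matrix $H_{\C_k(\SSS,{\bf v},\infty)}$ of Lemma \ref{lem.parity check matrix} is a generator matrix of $\C_k(\SSS,{\bf v},\infty)^{\perp}$, and that this matrix is independent of whether $\SSS$ is $k$-zero-sum free.

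For Case 2), where $2\mid(n-k)$, the proposed pair $(\mathcal{A},\mathcal{B})$ is identical to the one in Theorem \ref{th.ECP_of_C_k}.2). Because the generator and parity-check matrices of $\C_k(\SSS,{\bf v},\infty)$ do not depend on zero-sum freeness, the Schur-product inclusion $\mathcal{A}\star\mathcal{B}\subseteq \C_k(\SSS,{\bf v},\infty)^{\perp}$ of condition \rmnum{1}), together with $d(\mathcal{B}^{\perp})=\ell+1>\ell$ of condition \rmnum{2}) and $\dim(\mathcal{A})=\ell+1>\ell$ of condition \rmnum{3}), carry over verbatim; here $\mathcal{A}\star\mathcal{B}$ reaches exponent $2\ell-1=n-k-1$ with last entry $-1$, matching the special row of $H_{\C_k(\SSS,{\bf v},\infty)}$. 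The only change is that now $d(\C_k(\SSS,{\bf v},\infty))=2\ell+1$ rather than $2\ell+2$, but with $d(\mathcal{A})=n-\ell+1$ we still obtain $d(\mathcal{A})+d(\C_k(\SSS,{\bf v},\infty))=n+\ell+2>n+1$, so condition \rmnum{4}) holds as well.

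For Case 1), where $2\nmid(n-k)$, we have $n-k=2\ell+1$, so $k=n-2\ell-1$ and the code has parameters $[n+1,n-2\ell-1,2\ell+2]_q$. First I would write the generator matrix of $\mathcal{A}$ with rows $(a_i^s,0)$ for $0\le s\le \ell$, and that of $\mathcal{B}=\GRS_{\ell}(\SSS,{\bf v}^{-1}\star\uuu,\infty)$ with first-$n$ coordinates $\tfrac{u_i}{v_i}a_i^t$ for $0\le t\le \ell-1$, the last column being zero except for a single $1$ in the bottom row. Since every row of $\mathcal{A}$ has last coordinate $0$, each Schur product of a row of $\mathcal{A}$ with a row of $\mathcal{B}$ has the form $(\tfrac{u_i}{v_i}a_i^{r},0)$ with $0\le r\le 2\ell-1$; these coincide with the first $2\ell=n-k-1$ rows of $H_{\C_k(\SSS,{\bf v},\infty)}$, whose last entries all vanish in that range, yielding condition \rmnum{1}). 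Next, Lemma \ref{lem.GRS duality}.2) identifies $\mathcal{B}^{\perp}$ as an $[n+1,n-\ell+1,\ell+1]_q$ EGRS code, so $d(\mathcal{B}^{\perp})=\ell+1>\ell$ gives condition \rmnum{2}); $\dim(\mathcal{A})=\ell+1>\ell$ gives condition \rmnum{3}); and since all codewords of $\mathcal{A}$ have zero last coordinate, $d(\mathcal{A})=n-\ell$ equals the minimum distance of $\GRS_{\ell+1}(\SSS,{\bf 1})$, whence $d(\mathcal{A})+d(\C_k(\SSS,{\bf v},\infty))=n+\ell+2>n+1$ settles condition \rmnum{4}).

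The main obstacle, and the genuine content of this theorem relative to Theorem \ref{th.ECP_of_C_k}, is explaining why the odd parity $2\nmid(n-k)$, which obstructed a full $\ell$-ECP in the MDS setting of Theorem \ref{th.ECP_of_C_k}.1), causes no difficulty here. The reason is that the NMDS code has $n-k=2\ell+1$, so $H_{\C_k(\SSS,{\bf v},\infty)}$ carries one more row than in the corresponding MDS case, and the special row with $-1$ in its last entry now sits at index $n-k-1=2\ell$, strictly above the highest exponent $2\ell-1$ reachable by the Schur product. This extra room is exactly what lets us use the zero-appended code $\mathcal{A}$ of full dimension $\ell+1$ while keeping $\mathcal{A}\star\mathcal{B}$ inside the zero-last-column block of the dual, so the argument never invokes Lemma \ref{lem.GRS_ECP} (which would force $\C_k(\SSS,{\bf v},\infty)$ to be GRS). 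Verifying that this last-column bookkeeping lines up correctly, namely that the Schur product never touches the two rows of $H_{\C_k(\SSS,{\bf v},\infty)}$ with nonzero last entries, is the one step that demands care, and it is precisely the check outlined above.
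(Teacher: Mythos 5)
Your proposal is correct and follows essentially the same route as the paper's proof: direct verification of conditions \rmnum{1})--\rmnum{4}) from explicit generator matrices, with the zero-appended code $\mathcal{A}$ and the EGRS code $\mathcal{B}$ in case 1), and reuse of the pair from Theorem \ref{th.ECP_of_C_k}.2) in case 2) with only condition \rmnum{4}) recomputed for the smaller minimum distance $2\ell+1$. Your closing observation---that the Schur product only reaches exponent $2\ell-1=n-k-2$ and so never touches the two rows of $H_{\C_k(\SSS,{\bf v},\infty)}$ with nonzero last entries, which is why the odd-parity obstruction of Theorem \ref{th.ECP_of_C_k}.1) (via Lemma \ref{lem.GRS_ECP}, inapplicable here since the code is NMDS rather than MDS) disappears---is accurate and consistent with the paper's computation, though the paper leaves it implicit.
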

\begin{proof} 
    1) Since $\SSS$ contains a $k$-zero-sum subset,  
    it follows from Lemma \ref{lem.must be MDS or NMDS}.2) that $d(\C_k(\SSS,{\bf v},\infty))=n-k+1$.  
    From $2\nmid (n-k)$ and $3\leq k\leq n-2$, we get that  
    $3\leq k\leq n-3$ and $1\leq \ell=\left \lfloor \frac{d(\C_k(\SSS,{\bf v},\infty))-1}{2} \right \rfloor=\frac{n-k-1}{2}\leq \frac{n-4}{2}<\frac{n+1}{2}-1$ 
    and $\C_{k}(\SSS,{\bf v},\infty)$ has parameters $[n+1,n-2\ell-1,2\ell+2]_q$. 
    In this case, we immediately derive from \eqref{eq.ESGRS_parity_matrix} that 
    $\C_k(\SSS,{\bf v},\infty)^{\perp}$ has a generator matrix of the form 
    \begin{align}\label{eq.ESGRS_generator_matrix of dual code222}
        \begin{pmatrix}    
            \frac{u_1}{v_1}  &  \ldots & \frac{u_n}{v_n} & 0 \\ 
            \frac{u_1}{v_1}a_1    & \ldots & \frac{u_n}{v_n}a_{n} & 0 \\
    \vdots &   \ddots & \vdots & \vdots \\
    \frac{u_1}{v_1}a_1^{2\ell-1}    & \ldots & \frac{u_n}{v_n}a_{n}^{2\ell-1} & 0 \\
    \frac{u_1}{v_1}a_1^{2\ell}    & \ldots & \frac{u_n}{v_n}a_{n}^{2\ell} & -1 \\
    \frac{u_1}{v_1}a_1^{2\ell+1}    &  \ldots &  \frac{u_n}{v_n}a_{n}^{2\ell+1} & -\sum_{i=1}^na_i
        \end{pmatrix}.    
    \end{align}
    Take 
    \begin{align*}
        \mathcal{A} & = \{(c_1,\ldots,c_n,0):~(c_1,\ldots,c_n)\in \GRS_{\ell+1}(\SSS,{\bf 1})\} \\
                    & = \{(f(a_1),\ldots,f(a_n),0):~f(x)\in \F_{q}[x]_{\ell+1}\}
    \end{align*}
    and 
    $\mathcal{B}=\GRS_{\ell}(\SSS,{\bf v}^{-1}\star \uuu,\infty)$. 
    According to \eqref{eq.GRS.generator matrix}, we deduce that  
    $\mathcal{A}$ and $\mathcal{B}$ have generator matrices of the forms 
    \begin{align*}
        G_{\mathcal{A}}=\begin{pmatrix}    
            1 &   \ldots & 1 & 0 \\ 
            a_1  &  \ldots & a_{n} & 0 \\
            \vdots  &  \ddots & \vdots & \vdots \\
            a^{\ell}_1  &  \ldots & a^{\ell}_{n} & 0
        \end{pmatrix}
    \end{align*} 
    and 
    \begin{align*}    
        G_{\mathcal{B}}=\begin{pmatrix}    
            \frac{u_1}{v_1}  &  \ldots & \frac{u_n}{v_n} & 0 \\ 
            \frac{u_1}{v_1}a_1    & \ldots & \frac{u_n}{v_n}a_{n} & 0 \\
    \vdots &   \ddots & \vdots & \vdots \\
    \frac{u_1}{v_1}a_1^{\ell-2}    & \ldots & \frac{u_n}{v_n}a_{n}^{\ell-2} & 0 \\
    \frac{u_1}{v_1}a_1^{\ell-1}    & \ldots & \frac{u_n}{v_n}a_{n}^{\ell-1} & 1 \\
        \end{pmatrix},
    \end{align*}
    respectively.

    It is not difficult to show that $\mathcal{A}\star \mathcal{B}$ is an $[n+1,2\ell]_q$ linear code 
    with generator matrix of the form 
    \begin{align*}
        G_{\mathcal{A}\star \mathcal{B}}=\left(
            \begin{array}{cccc}
                \frac{u_1}{v_1}  &  \ldots & \frac{u_n}{v_n} & 0 \\ 
                \frac{u_1}{v_1}a_1    & \ldots & \frac{u_n}{v_n}a_{n} & 0 \\
        \vdots &  \ddots & \vdots & \vdots \\
        \frac{u_1}{v_1}a_1^{2\ell-1}   & \ldots & \frac{u_n}{v_n}a_{n}^{2\ell-1} & 0 \\              
            \end{array}
        \right).     
\end{align*}
Combining with \eqref{eq.ESGRS_generator_matrix of dual code222}, 
we get that $\mathcal{A}\star \mathcal{B}\subseteq \C_k(\SSS,{\bf v},\infty)^{\perp}$ 
and hence, the condition \rmnum{1}) in Definition \ref{def.ECP} holds. 
Moreover, since the dual code of an MDS code is still MDS, we can also verify that  
$$d(\mathcal{B}^{\perp})=d(\GRS_{\ell}(\SSS,{\bf v}^{-1}\star \uuu,\infty)^{\perp})=\ell+1>\ell$$
$$\dim(\mathcal{A})=\ell+1>\ell,$$ 
{\rm and} 
\begin{align*}
d(\mathcal{A})+d(\C_{k}(\SSS,{\bf v},\infty)) & = n+1-(\ell+1)+2\ell+2 \\ 
                                              & = n+\ell+2 \\
                                              & > n+1,
\end{align*}
Therefore, the conditions \rmnum{2}), \rmnum{3}) and \rmnum{4}) in Definition \ref{def.ECP} also hold. 
With Definition \ref{def.ECP}, 
we immediately conclude that 
$$\left( \{(f(a_1),\ldots,f(a_n),0):~f(x)\in \F_{q}[x]_{\ell+1}\},
\GRS_{\ell}(\SSS,{\bf v}^{-1}\star \uuu,\infty) \right)$$ 
is an $\ell$-ECP of $\C_k(\SSS,{\bf v},\infty)$ over $\F_q$. 
This completes the proof of 1). 

2) In this case, we have that 
$1\leq \ell=\left \lfloor \frac{d(\C_k(\SSS,{\bf v},\infty))-1}{2} \right \rfloor=\frac{n-k}{2}\leq \frac{n-3}{2}<\frac{n+1}{2}-1$ 
and $\C_{k}(\SSS,{\bf v},\infty)$ has parameters $[n+1,n-2\ell,2\ell+1]_q$. 
Take $\mathcal{A}=\GRS_{\ell+1}(\SSS,{\bf 1},\infty)$ and 
$\mathcal{B}=\GRS_{\ell}(\SSS,{\bf v}^{-1}\star \uuu, \infty)\diag(\underbrace{1,\ldots,1}_n,-1)$ 
as Theorem \ref{th.ECP_of_C_k}.2). 
By arguments similar to those in the proofs of Theorem \ref{th.ECP_of_C_k}.2) and 1) above, 
it suffices to show that 
\begin{align*}
d(\mathcal{A})+d(\C_{k}(\SSS,{\bf v},\infty)) & = n+1-(\ell+1)+1+2\ell+1\\ & = n+\ell+2 \\ &> n+1.
\end{align*}
As a result, we conclude that 
$$
\left(
    \GRS_{\ell+1}(\SSS,{\bf 1},\infty), \GRS_{\ell}(\SSS,{\bf v}^{-1}\star \uuu, \infty)\diag(\underbrace{1,\ldots,1}_n,-1)
\right)$$
is an $\ell$-ECP of $\C_k(\SSS,{\bf v},\infty)$ over $\F_q$  
by Definition \ref{def.ECP}.
This completes the proof of 2). 
\end{proof}

\begin{remark}
    Note that we suppose $\ell=\left \lfloor \frac{d(\C_k(\SSS,{\bf v},\infty))-1}{2} \right \rfloor$ 
    for a more convenient representation in Theorems \ref{th.ECP_of_C_k} and \ref{th.ECP_of_C_k222}. 
    Combining with Lemma \ref{lem.must be MDS or NMDS}, we can further deduce the explicit values of $\ell$ 
    as listed in Table \ref{tab.ECP}, where the notation is the same with that used in 
    Theorems \ref{th.ECP_of_C_k} and \ref{th.ECP_of_C_k222}. 
    
    \begin{table*}[h!]
        \centering
        \caption{{$\ell$-ECPs $(\mathcal{A},\mathcal{B})$ of ESGRS codes $\C_{k}(\SSS,{\bf v},\infty)$}}\label{tab.ECP}       
        \scalebox{0.85}{
            \begin{tabular}{c|c|c|c|c}
         \hline 
        $\C_k(\SSS,{\bf v},\infty)$ & $2\mid (n-k)?$  & $\ell$ & ECP $(\mathcal{A},\mathcal{B})$ & Result \\ \hline  \hline
        $[n+1,k,n-k+2]_q$ & No  &  $\frac{n-k-1}{2}$ 
        & $\left(\GRS_{\frac{n-k+1}{2}}(\SSS',{\bf 1}),\GRS_{\frac{n-k-1}{2}}(\SSS',{\bf v}'^{-1}\star {\bf u}')\right)$ & Theorem \ref{th.ECP_of_C_k}.1)\\
        \hline

        $[n+1,k,n-k+2]_q$ & Yes  & $\frac{n-k}{2}$ & 
        $\left(\GRS_{\frac{n-k+2}{2}}(\SSS,{\bf 1},\infty),\GRS_{\frac{n-k}{2}}(\SSS,{\bf v}^{-1}\star {\bf u},\infty)\diag(\underbrace{1,\ldots,1}_n,-1)\right)$ & Theorem \ref{th.ECP_of_C_k}.2)\\ 
        \hline

        $[n+1,k,n-k+1]_q$ & No  &  $\frac{n-k-1}{2}$ & 
        $\left(\{(f(a_1),f(a_2),\ldots,f(a_n),0):~f(x)\in \F_q[x]_{\frac{n-k+1}{2}}\},\GRS_{\frac{n-k-1}{2}}(\SSS,{\bf v}^{-1}\star {\bf u},\infty)\right)$ & Theorem \ref{th.ECP_of_C_k222}.1)\\
        \hline

        $[n+1,k,n-k+1]_q$ & Yes  & $\frac{n-k}{2}$ & 
        $\left(\GRS_{\frac{n-k+2}{2}}(\SSS,{\bf 1},\infty),\GRS_{\frac{n-k}{2}}(\SSS,{\bf v}^{-1}\star {\bf u},\infty)\diag(\underbrace{1,\ldots,1}_n,-1)\right)$ & Theorem \ref{th.ECP_of_C_k222}.2) \\
                \hline
        \end{tabular}}
    \end{table*} 
\end{remark}

\section{Unique decoding of ESGRS codes based on their ECPs}\label{sec.4}

In this section, we present an explicit unique decoding algorithm for ESGRS codes based on their $\ell$-ECPs 
established in Section \ref{sec.3} and discuss its decoding complexities. 
We also present some concrete examples to illustrate our decoding algorithm. 

\subsection{An explicit unique decoding algorithm of ESGRS codes}

In this subsection, we provide an explicit unique decoding algorithm for ESGRS codes 
$\C_k(\SSS,{\bf v},\infty)$ based on their $\ell$-ECPs. 
First of all, we recall the error-correcting mechanism of $\ell$-ECPs for a given $[n,k]_q$ linear code $\C$ 
(see also \cite{HL2023DM,CP2020}).  
Let $(\mathcal{A},\mathcal{B})$ be an $\ell$-ECP of $\C$ and let  
${\bf y}={\bf c}+{\bf e}\in \F_q^n$ be a received vector with ${\bf c}\in \C$ and $\wt({\bf e})\leq \ell$. 
Note that based on conditions ${\rm \rmnum{1})}$-${\rm \rmnum{3})}$ in Definition \ref{def.ECP}, 
Duursma and K\"otter showed in \cite[Theorem 1]{DK1994} that 
there exists a nonzero vector ${\bf a}\in \mathcal{A}$ such that 
    \begin{align*}
        \langle ({\bf a}\star {\bf b}), {\bf y}\rangle=0,~\forall~ {\bf b}\in \mathcal{B}
    \end{align*}
    or equivalently, 
    \begin{align}\label{eq.ECPDM222}
        G_{\mathcal{B}}\cdot \diag({\bf y}) \cdot {\bf a}^T={\bf 0}, 
    \end{align}
where $G_{\mathcal{B}}$ denotes a generator matrix of $\mathcal{B}$ and 
$\diag({\bf y})$ represents the $n\times n$ diagonal matrix that has the elements of ${\bf y}$ on its main diagonal, 
and any solution ${\bf a}\in \mathcal{A}$ satisfies ${\bf e}\star {\bf a}={\bf 0}$.  
Then it is easily seen that $\supp({\bf e})\subseteq z({\bf a})$, and hence, 
any nonzero solution ${\bf a}$ can efficiently locate the possible error positions of ${\bf y}$. 

In addition, Pellikaan pointed out in \cite{P1992} that the condition 
``${\rm \rmnum{4})}$ $d(\mathcal{A})+d(\C)> n$'' in Definition \ref{def.ECP} 
can guarantee  that 
\begin{align}\label{eq.ECPDM333}
    H_\C{\bf x}^T = H_\C{\bf y}^T=H_\C{\bf e}^T
\end{align} 
has a unique solution, 
where $H_{\C}$ is a parity-check matrix of $\C$. 
It then follows that ${\bf x}={\bf e}$ and $x_i = 0$ for all $i \notin z({\bf a})$ 
if we write ${\bf x}=(x_1,\ldots,x_n)$.
As a result, the $\ell$-ECP $(\mathcal{A},\mathcal{B})$ has uniquely decoded the received vector ${\bf y}$ 
to the codeword ${\bf c}={\bf y}-{\bf x}$.

Based on the above discussion, we conclude that to decode a linear code $\C$ using its $\ell$-ECP $(\mathcal{A}, \mathcal{B})$, 
one should first determine explicit representations of the code pair $\mathcal{A}$ and $\mathcal{B}$, 
and then solve the system of equations \eqref{eq.ECPDM222} and \eqref{eq.ECPDM333}. 
By combining Theorems \ref{th.ECP_of_C_k} and \ref{th.ECP_of_C_k222}, 
we can now present a decoding algorithm for $\C_k(\SSS,{\bf v},\infty)$ based on its $\ell$-ECP as follows.

\begin{theorem}\label{th.decoding ECP}
    If $\SSS$ is $k$-zero-sum free and $2\nmid (n-k)$, 
    then an ESGRS code $\C_k(\SSS,{\bf v},\infty)$ can be uniquely decoded by Algorithm \ref{alg.1} in $O(n^3)$ time. 
    Moreover, the decoding algorithms with the same time complexity for $\C_k(\SSS,{\bf v},\infty)$ with respect to 
    the other three cases discussed in Theorems \ref{th.ECP_of_C_k} and \ref{th.ECP_of_C_k222} 
    can be similarly derived by modifying Lines 8-14 of Algorithm \ref{alg.1} based on Table \ref{tab.ECP} accordingly. 
\end{theorem}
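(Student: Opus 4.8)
The plan is to split the claim into two independent tasks — (i) correctness of Algorithm \ref{alg.1}, i.e.\ that it returns the transmitted codeword whenever at most $\ell-1=\tfrac{n-k-1}{2}$ errors occur, and (ii) the $O(n^3)$ running-time bound — and then to explain why the three remaining cases collapse onto the same analysis. For (i) I would not reprove the general ECP decoding principle but simply instantiate the framework recalled immediately before the theorem. By Theorem \ref{th.ECP_of_C_k}.1), under the hypotheses $\SSS$ is $k$-zero-sum free and $2\nmid(n-k)$, the pair $(\mathcal{A},\mathcal{B})=(\GRS_{\ell}(\SSS',{\bf 1}),\GRS_{\ell-1}(\SSS',{\bf v}'^{-1}\star\uuu'))$ is an $(\ell-1)$-ECP of $\C_k(\SSS,{\bf v},\infty)$, so conditions \rmnum{1})--\rmnum{4}) of Definition \ref{def.ECP} hold.

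Writing ${\bf y}={\bf c}+{\bf e}$ with $\wt({\bf e})\leq\ell-1$, conditions \rmnum{1})--\rmnum{3}) together with the Duursma--K\"otter theorem \cite[Theorem 1]{DK1994} guarantee that, after parametrising ${\bf a}={\bf m}\,G_{\mathcal{A}}$, the homogeneous system \eqref{eq.ECPDM222} — with coefficient matrix $G_{\mathcal{B}}\,\diag({\bf y})\,G_{\mathcal{A}}^{T}$ of size $(\ell-1)\times\ell$ — has a nonzero solution, and that every nonzero solution ${\bf a}$ satisfies ${\bf e}\star{\bf a}={\bf 0}$, hence $\supp({\bf e})\subseteq z({\bf a})$. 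Condition \rmnum{4}) then forces the restricted syndrome system \eqref{eq.ECPDM333}, with unknowns confined to the positions $z({\bf a})$, to have a unique solution ${\bf x}={\bf e}$, so the returned vector ${\bf y}-{\bf x}={\bf c}$ is correct. Here I would also record that, since $\mathcal{A}=\GRS_{\ell}(\SSS',{\bf 1})$ is an MDS code of length $n+1$, any nonzero ${\bf a}$ has $|z({\bf a})|\leq\ell-1$; thus \eqref{eq.ECPDM333} involves only $O(n)$ unknowns, a fact I reuse in the complexity count.

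For (ii) I would walk through the lines of Algorithm \ref{alg.1} and bound each. Building $\SSS'$, $\vvv'$, and $\uuu'$ from \eqref{eq.u_i'} costs $O(n^2)$ field operations, since each of the $n+1$ entries is a product of $O(n)$ factors; assembling the Vandermonde-type generator matrices $G_{\mathcal{A}}$ (of size $\ell\times(n+1)$) and $G_{\mathcal{B}}$ (of size $(\ell-1)\times(n+1)$) is $O(n^2)$; forming $G_{\mathcal{B}}\,\diag({\bf y})$ scales columns in $O(n^2)$ and then multiplying by $G_{\mathcal{A}}^{T}$ is $O(n\,\ell^2)=O(n^3)$; solving \eqref{eq.ECPDM222} by Gaussian elimination on an $O(n)\times O(n)$ matrix is $O(n^3)$; recovering ${\bf a}={\bf m}\,G_{\mathcal{A}}$ and reading off $z({\bf a})$ cost $O(n^2)$; and solving \eqref{eq.ECPDM333} — a system with $n+1-k=O(n)$ equations and, by the MDS remark, $O(n)$ unknowns — is again $O(n^3)$. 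Summing gives the claimed $O(n^3)$ bound.

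Finally, for the ``moreover'' assertion I would argue that Lines 8--14 are precisely the step constructing the concrete pair $(\mathcal{A},\mathcal{B})$, and that nothing else in the algorithm depends on which of the four cases we are in. In each remaining case, Theorems \ref{th.ECP_of_C_k}.2), \ref{th.ECP_of_C_k222}.1), and \ref{th.ECP_of_C_k222}.2) supply an $\ell$-ECP whose components, as tabulated in Table \ref{tab.ECP}, are again (extended) GRS codes or a shortened GRS code with explicit generator matrices of dimension $O(n)\times(n+1)$; substituting these for $G_{\mathcal{A}},G_{\mathcal{B}}$ leaves the solve--locate--solve pipeline and all matrix sizes unchanged, so correctness follows verbatim from the corresponding ECP theorem and the complexity remains $O(n^3)$. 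I expect the only genuinely delicate point to be the bookkeeping that keeps every linear system at size $O(n)$ — in particular the MDS argument bounding $|z({\bf a})|$, without which the second solve could a priori be larger — rather than any new mathematical difficulty, since correctness is entirely inherited from the already-established ECP machinery.
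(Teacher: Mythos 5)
Your proposal is correct and follows essentially the same route as the paper: the paper's proof likewise obtains correctness by combining Theorems \ref{th.ECP_of_C_k} and \ref{th.ECP_of_C_k222} with the Duursma--K\"otter/Pellikaan ECP decoding framework recalled before the theorem, and handles the other three cases exactly as you do, by swapping in the pairs from Table \ref{tab.ECP} in Lines 8--14. The only difference is that the paper simply cites \cite[Theorem 2.14]{P1992} for the $O(n^3)$ bound, whereas you re-derive it by a line-by-line operation count (including the observation that $|z({\bf a})|\leq \ell-1$ since $\mathcal{A}$ is MDS); this is a harmless, more self-contained substitute for the citation.
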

\begin{proof}
    The unique decoding correctness is straightforward by combining Theorems \ref{th.ECP_of_C_k} and \ref{th.ECP_of_C_k222} as well as the discussions above. 
    By \cite[Remark 2.2(1)]{P1996}, the algorithm takes $O((n t)^3)$ time, with $t$ denoting the degree of the field extension 
of the finite field that contains the $\ell$-ECP $(\mathcal{A},\mathcal{B})$ over the base field of the ESGRS code $\C_k(\SSS,{\bf v},\infty)$. 
By Theorems \ref{th.ECP_of_C_k} and \ref{th.ECP_of_C_k222}, we have $t=1$ and hence, 
the time complexity reduces to $O(n^3)$. 
    More details can be found in Algorithm \ref{alg.1}.
\end{proof}

\subsection{More discussions on the decoding complexity}\label{subsec.complexity}
 
In Theorem \ref{th.decoding ECP}, we showed that the time complexity of Algorithm \ref{alg.1} is $O(n^3)$ in general.
Similar to \cite{JYS2025}, \cite{SY2023}, \cite{SYJL2024}, and \cite{WLL2025}, the extended Euclidean algorithm could also be 
a potential method to further deduce a unique decoding algorithm for $\C_k(\SSS,{\bf v},\infty)$ that operates in $O(qn)$ time. 
Since our algorithm complexity only depends on the code length $n+1$ and $n\leq q$, 
it is more efficient than those based on the extended Euclidean algorithm when $q>n^{2+\epsilon}$ for any $\epsilon>0$. 
On the other hand, the extended Euclidean algorithm starts from the parity-check matrix, but according to Lemma \ref{lem.parity check matrix}, 
the dual code of an ESGRS code may not be an ESGRS code of the same form. 
Therefore, even when the code length is linearly related to $q$, additional work must be done to precisely 
implement the extended Euclidean algorithm. 
In this sense, our decoding algorithm based on $\ell$-ECPs can achieve better performance than the extended Euclidean algorithm.

Moreover, focusing on the system of linear equations
$$G_{\mathcal{B}}\cdot \diag({\bf y}) \cdot G_{\mathcal{A}}^T\cdot {\bf s}^T={\bf 0},$$
which appears in Line 15 of Algorithm \ref{alg.1}, we know from \cite[Section \Rmnum{6}.B]{DK1994} that 
there are two potential approaches to reduce the decoding complexity of Algorithm \ref{alg.1} to $O(n^2)$ or even lower: 
\begin{itemize}
    \item exploiting the possible regular structures in the matrix $G_{\mathcal{B}}\cdot \diag({\bf y}) \cdot G_{\mathcal{A}}^T$;
    \item reducing the size of the field that contains the entries of the matrix $G_{\mathcal{B}}\cdot \diag({\bf y}) \cdot G_{\mathcal{A}}^T$.
\end{itemize}
According to Theorems \ref{th.ECP_of_C_k} and \ref{th.ECP_of_C_k222}, the matrices $G_{\mathcal{A}}$ and $G_{\mathcal{B}}$ 
are closely related to the choices of $\SSS$ and ${\bf v}$, rather than anything else.  
Note also that $\diag({\bf y})$ is a diagonal matrix only depending on the error vector 
${\bf e}$ with ${\bf y}={\bf c}+{\bf e}$.  
Therefore, it is possible to design $\SSS$ and ${\bf v}$ or consider some special error patterns 
such that the above two approaches can be reached, leading to more efficient computations of solving the system of linear equations 
$G_{\mathcal{B}}\cdot \diag({\bf y}) \cdot G_{\mathcal{A}}^T\cdot {\bf s}^T={\bf 0}.$

\begin{algorithm}\label{alg.1}
    \caption{An algorithm for uniquely decoding $\C_k(\SSS,\vvv,\infty)$ 
    when $\SSS$ is $k$-zero-sum free and $2\nmid (n-k)$}
    \KwIn{A received vector ${\bf y}=(y_1,\ldots,y_{n+1})\in \F_q^{n+1}$}
    \KwOut{A decoded codeword ${\bf c}\in \C_k(\SSS,\vvv,\infty)$ or a failure message}
    \Begin{
        $H_{\C_k(\SSS,\vvv,\infty)}\leftarrow$ the parity-check matrix of $\C_k(\SSS,\vvv,\infty)$ given by \eqref{eq.ESGRS_generator_matrix}\\
        \uIf {$H_{\C_k(\SSS,\vvv,\infty)}\cdot {\bf y}^T={\bf 0}$} 
                {${\bf c} \leftarrow {\bf y}$\\
                \Return ${\bf c}$ is the decoded codeword} 
        \Else {
            $\diag({\bf y})\leftarrow \diag(y_1,\ldots,y_{n+1})$\\
            $\ell \leftarrow \frac{n-k-1}{2}$\\
            $\gamma\leftarrow$ arbitrary nonzero element of $\F_q\setminus \SSS$\\
            $\SSS'\leftarrow \{(a_1-\gamma)^{-1},\ldots,(a_n-\gamma)^{-1},0\}$\\
            $\vvv'\leftarrow (v_1(a_1-\gamma)^{k-1},\ldots,v_{n}(a_n-\gamma)^{k-1},1)$\\
            $\uuu'\leftarrow (u_1',\ldots,u_{n+1}')$ with $u_i'$ given in Theorem \ref{th.ECP_of_C_k}.1) \\
            $G_{\mathcal{A}}\leftarrow $ the generator matrix of $\GRS_{\ell+1}(\SSS',{\bf 1})$ given by \eqref{eq.GRS.generator matrix}\\
            $G_{\mathcal{B}}\leftarrow $ the generator matrix of $\GRS_{\ell}(\SSS',\vvv'^{-1}\star \uuu')$ given by \eqref{eq.GRS.generator matrix}\\
            ${\bf s}({\bf y}) \leftarrow \{{\bf s}=(s_1,\ldots,s_{\dim(\mathcal{A})}):~ G_{\mathcal{B}}\cdot \diag({\bf y}) \cdot G_{\mathcal{A}}^T\cdot {\bf s}^T={\bf 0}\}$\\

            \uIf {${\bf s}({\bf y})=\{{\bf 0}\}$} 
                {\Return ${\bf y}$ has more than $\ell$ errors} 
        \Else {
            ${\bf s}_0\leftarrow $ any nonzero vector in ${\bf s}({\bf y})$\\
            ${\bf a}\leftarrow {\bf s}_0 G_\mathcal{A}$\\
            $Z\leftarrow z({\bf a})$\\
            ${\bf x}({\bf y})\leftarrow \{{\bf x}:~ 
            H_{\C_k(\SSS,\vvv,\infty)}\cdot {\bf x}^T = H_{\C_k(\SSS,\vvv,\infty)}\cdot {\bf y}^T~{\rm with}~
            x_i=0,~\forall~i\notin Z\}$\\

            \uIf {${\bf x}({\bf y})=\{{\bf x}_0\}$ and $\wt({\bf x}_0)\leq \ell$} 
                {${\bf c}\leftarrow {\bf y}-{\bf x}_0$\\
                \Return ${\bf c}$ is the decoded codeword}
        \Else {
            \Return ${\bf y}$ has more than $\ell$ errors

            } 
            }

        }
           
}
\end{algorithm}

\subsection{Decoding examples of success and failure} 

Finally, we give some specific examples to illustrate the decoding procedure given in Algorithm \ref{alg.1}.
\begin{example}\label{exam.1}
    Let $q=17$, $\SSS=\{1,3,5,7,10,12,14,16\}\subseteq \F_{17}$, and $\vvv=(1,1,1,1,1,1,1,1)\in (\F_{17}^*)^8$. 
    Consider the ESGRS code $\C_3(\SSS,\vvv,\infty)$ 
    with generator matrix 
    \begin{equation*}
        G_3=\begin{pmatrix}    
    1 &  1 & 1 & 1 & 1 & 1 & 1 & 1 & 0 \\
    1 &  3 & 5 & 7 & 10 & 12 & 14 & 16 & 0 \\ 
    1 &  10 & 6 & 3 & 14 & 11 & 7 & 16 & 1
        \end{pmatrix}. 
    \end{equation*}
    Suppose that the received vector is ${\bf y}=(4, 6, 1, 14, 5, 7,  12, 15, \\ 2)\in \F_{17}^9$ and 
    we have $$\diag({\bf y})=\diag(4, 6, 1, 14, 5, 7, 12, 15, 2).$$
    We have the following results. 
    \begin{itemize}
        \item Note that $\SSS$ is $3$-zero-sum free.
        It then follows from Lemmas \ref{lem.parity check matrix}, \ref{lem.must be MDS or NMDS}.1), 
        and \ref{lem.non-GRS} that the ESGRS code $\C_3(\SSS,\vvv,\infty)$ is a 
        $[9,3,7]_{17}$ non-GRS MDS code with parity-check matrix 
        \begin{equation*}
            H_{\C_3(\SSS,\vvv,\infty)}=\begin{pmatrix}    
        4 &  1 & 11 & 13 & 4 & 6 & 16 & 13 & 0 \\
        4 &  3 & 4 & 6 & 6 & 4 & 3 & 4 & 0 \\ 
        4 &  9 & 3 & 8 & 9 & 14 & 8 & 13 & 0 \\
        4 &  10 & 15 & 5 & 5 & 15 & 10 & 4 & 0 \\
        4 &  13 & 7 & 1 & 16 & 10 & 4 & 13 & 16 \\
        4 &  5 & 1 & 7 & 7 & 1 & 5 & 4 & 0 
            \end{pmatrix}. 
        \end{equation*} 

        Note also that $$H_{\C_3(\SSS,\vvv,\infty)}\cdot {\bf y}^T=(1,4,7,12,13,1)^T\neq {\bf 0},$$ 
        so ${\bf y}$ is not a codeword of $\C_3(\SSS,\vvv,\infty)$.

        \item From Lines 8 to 14 of Algorithm \ref{alg.1}, we have $\ell=2$ and 
        we can take 
        $\gamma=2\in \F_{17}\setminus \SSS$,  
        $\SSS'=\{16,1,6,7,15,12,10,11,0\}$, 
        $\vvv'=( 1, 1, 9, 8, 13, 15, 8, 9,1)$, and 
        $\uuu'=(12, 14, 11, 1, 7, 12, 4, 4, 3 )$. 
        Then $(\mathcal{A},\mathcal{B})$ is a $2$-ECP of $\C_3(\SSS,\vvv,\infty)$, 
        where  
        $$
        \mathcal{A}=\GRS_{3}(\SSS',{\bf 1})~\mbox{and}~\mathcal{B}=\GRS_{2}(\SSS',{\bf v}'^{-1}\star {\bf u}').
        $$
        Moreover, we deduce that 
        $$
        G_{\mathcal{A}}=\begin{pmatrix} 
            1 &  1 & 1 & 1 & 1 & 1 & 1 & 1 & 1 \\   
            16 & 1 & 6 & 7 & 15 & 12 & 10 & 11 & 0 \\
            1 &  1 & 2 & 15 & 4 & 8 & 15 & 2 & 0 \\   
                \end{pmatrix}
        $$
        \mbox{and}~
        $$
        G_{\mathcal{B}}=\begin{pmatrix}
            5 & 14 & 13 & 3 & 12 & 13 & 5 & 3 & 3 \\
            12 &  14 & 10 & 4 & 10 & 3 & 16 & 16 & 0      
        \end{pmatrix}.
        $$
        
       \item With Line 15 of Algorithm \ref{alg.1}, we solve the system of linear equations 
       $$G_{\mathcal{B}}\cdot \diag({\bf y}) \cdot G_{\mathcal{A}}^T\cdot {\bf s}^T={\bf 0}$$
       and get ${\bf s}({\bf y})=\{\mu(1,6,10):~ \mu \in \F_{17}\}$.  
       Take ${\bf s}_0=(1,6,10)$, then we have  
       $${\bf a}={\bf s}_0 G_{\mathcal{A}}=(5,0,6,6,12,0,7,2,1)~{\rm and}~Z=z({\bf a})=\{2,6\}.$$

       \item Based on Line 22 of Algorithm \ref{alg.1}, we further set ${\bf x}=(0,x_2,0,0,0,x_6,0,0,0)$ and 
       solve the system of linear equations  
       $$H_{\C_3(\SSS,\vvv,\infty)}{\bf x}^T=H_{\C_3(\SSS,\vvv,\infty)}{\bf y}^T.$$ 
       As a result, we get the solution ${\bf x}_0=\{0,16,0,0,0,6,0,0,\\0\}$ and $\wt({\bf x}_0)=2$. 

       \item Finally, according to Lines 24 and 25 of Algorithm \ref{alg.1}, 
       the received vector ${\bf y}$ is then uniquely decoded to 
       the codeword 
       $${\bf c}={\bf y}-{\bf x}_0=(4, 7, 1, 14, 5, 1, 12, 15, 2).$$ 
       Note also that $H_{\C_3(\SSS,\vvv,\infty)}\cdot {\bf c}^T={\bf 0}$. 
       Hence, the decoding process is successful. 
    \end{itemize}

\end{example}

\begin{example}\label{exam.2}
    Let $q=16$ and let $\w$ be a primitive element of $\F_{16}$ such that $\w^4+\w+1=0$, 
    $\SSS=\{1,\w,\w^2,\w^3,\w^4,\w^6, \w^7,\w^8,\w^9,\w^{11},\w^{12},\w^{13},\w^{14}\}\subseteq \F_{16}$, 
    and $\vvv=(1,1,1,1,1,1,1,1,1,1,1,1,1)\in (\F_{16}^*)^{13}$. 
    Consider the ESGRS code $\C_7(\SSS,\vvv,\infty)$ 
    with generator matrix of the form \eqref{eq.G7}.  
\begin{figure*}[!t]
\begin{equation}\label{eq.G7}
        G_7=\left(
            \begin{array}{cccccccccccccc}
                1 &  1 & 1 & 1 & 1 & 1 & 1 & 1 & 1 &  1 & 1 & 1 & 1 & 0 \\
                1 & \w & \w^2 & \w^3 & \w^4 & \w^6 & \w^7 & \w^8 & \w^9 & \w^{11} & \w^{12} & \w^{13} & \w^{14} & 0 \\
                1 & \w^2 & \w^4 & \w^6 & \w^8 & \w^{12} & \w^{14} & \w & \w^3 & \w^{7} & \w^{9} & \w^{11} & \w^{13} & 0 \\
                1 & \w^3 & \w^6 & \w^9 & \w^{12} & \w^{3} & \w^{6} & \w^{9} & \w^{12} & \w^{3} & \w^{6} & \w^{9} & \w^{12} & 0 \\
                1 & \w^4 & \w^8 & \w^{12} & \w & \w^9 & \w^{13} & \w^2 & \w^6 & \w^{14} & \w^{3} & \w^7 & \w^{11} & 0 \\
                1 & \w^5 & \w^{10} & 1 & \w^5 & 1 & \w^{5} & \w^{10} & 1 & \w^{10} & 1 & \w^5 & \w^{10} & 0 \\
                1 & \w^7 & \w^{14} & \w^6 & \w^{13} & \w^{12} & \w^4 & \w^{11} & \w^3 & \w^2 & \w^{9} & \w & \w^{8} & 1 \\              
            \end{array}
        \right)
\end{equation}   
    \hrulefill
\end{figure*}
    Suppose that the received vector is 
    ${\bf y}=(\w^{12},\w, \w^{13},\w^7,1,\w^3,\w^9,\w,\w^2,\w,\w^7,\w^6,\w^{12},\w^{11})\in \F_{16}^{14}$ and 
    we have 
    \begin{align*}
        \diag({\bf y})=\diag(\w^{12},\w, & \w^{13},\w^7,1, \w^3, \\ & \w^9,\w,\w^2,\w,\w^7,\w^6,\w^{12},\w^{11}).
    \end{align*}
    We have the following results.  
    \begin{itemize}
        \item Note that $\SSS$ contains a $7$-zero-sum subset, 
        such as the $7$-subset $\{1, \w^{12}, \w^2, \w^{13}, \w^3, \w^8, \w^9\}$. 
        It then follows from Lemmas \ref{lem.parity check matrix} and \ref{lem.must be MDS or NMDS}.2) 
        that the ESGRS code $\C_7(\SSS,\vvv,\infty)$ is a 
        $[14,7,7]_{16}$ NMDS code with parity-check matrix of the form \eqref{eq.H7}. 
\begin{figure*}[!t]
\begin{equation}\label{eq.H7}
            H_{\C_7(\SSS,\vvv,\infty)}=
            \left(
                \begin{array}{cccccccccccccc}
                    1 &  \w^{11} & \w^7 & \w^{11} & \w^{14} & \w^7 & \w^{11} & \w^{13} & \w^{13} & \w^{13} & \w^{14} & \w^{14} & \w^7 & 0 \\
                    1 &  \w^{12} & \w^9 & \w^{14} & \w^{3} & \w^{13} & \w^{3} & \w^{6} & \w^{7} & \w^{9} & \w^{11} & \w^{12} & \w^6 & 0 \\
                    1 &  \w^{13} & \w^{11} & \w^{2} & \w^{7} & \w^4 & \w^{10} & \w^{14} & \w & \w^{5} & \w^{8} & \w^{10} & \w^5 & 0 \\
                    1 &  \w^{14} & \w^{13} & \w^{5} & \w^{11} & \w^{10} & \w^{2} & \w^{7} & \w^{10} & \w & \w^{5} & \w^{8} & \w^4 & 0 \\
                    1 &  1 & 1 & \w^{8} & 1 & \w & \w^{9} & 1 & \w^{4} & \w^{12} & \w^{2} & \w^{6} & \w^3 & 0 \\
                    1 &  \w & \w^2 & \w^{11} & \w^{4} & \w^7 & \w & \w^{8} & \w^{13} & \w^{8} & \w^{14} & \w^{4} & \w^2 & 1 \\
                    1 &  \w^{2} & \w^4 & \w^{14} & \w^{8} & \w^{13} & \w^{8} & \w & \w^{7} & \w^{4} & \w^{11} & \w^{2} & \w & 1
                \end{array}
            \right)   
\end{equation}
    \hrulefill
\end{figure*}
        Note also that $$H_{\C_7(\SSS,\vvv,\infty)}\cdot {\bf y}^T=(\w^{11},\w^{11},\w^{10},\w^3,\w^6,\w^6,\w^9)^T\neq {\bf 0},$$ 
        so ${\bf y}$ is not a codeword of $\C_7(\SSS,\vvv,\infty)$.

        \item Modifying Lines 8 to 14 of Algorithm \ref{alg.1} based on Table \ref{tab.ECP}, we have $\ell=3$ and 
        $\uuu=(1, \w^{11}, \w^7, \w^{11}, \w^{14}, \w^7, \w^{11}, \w^{13}, \\ \w^{13}, \w^{13}, \w^{14}, \w^{14}, \w^7)$. 
        Then $(\mathcal{A},\mathcal{B})$ is a $3$-ECP of $\C_7(\SSS,\vvv,\infty)$, 
        where  
        $$
        \mathcal{A}=\{(f(a_1),\ldots,f(a_n),0):~f(x)\in \F_q[x]_4\}
        $$
        \mbox{and}~
        $$
        \mathcal{B}=\GRS_{3}(\SSS,{\bf v}^{-1}\star {\bf u},\infty).
        $$
        Moreover, we deduce \eqref{eq.GA_ex2} and \eqref{eq.GB_ex2}. 

        \begin{figure*}[!t]
        \begin{equation}\label{eq.GA_ex2}
        G_{\mathcal{A}}=\left(
            \begin{array}{cccccccccccccc} 
                1 &  1 & 1 & 1 & 1 & 1 & 1 & 1 & 1 &  1 & 1 & 1 & 1 & 0 \\
                1 & \w & \w^2 & \w^3 & \w^4 & \w^6 & \w^7 & \w^8 & \w^9 & \w^{11} & \w^{12} & \w^{13} & \w^{14} & 0 \\
                1 & \w^2 & \w^4 & \w^6 & \w^8 & \w^{12} & \w^{14} & \w & \w^3 & \w^{7} & \w^{9} & \w^{11} & \w^{13} & 0 \\
                1 & \w^3 & \w^6 & \w^9 & \w^{12} & \w^{3} & \w^{6} & \w^{9} & \w^{12} & \w^{3} & \w^{6} & \w^{9} & \w^{12} & 0 \\
        \end{array}
        \right) 
        \end{equation}   
        \hrulefill
        \end{figure*}

        \begin{figure*}[!t]
        \begin{equation}\label{eq.GB_ex2}
        G_{\mathcal{B}}=\left(
            \begin{array}{cccccccccccccc} 
                1 &  1 & 1 & 1 & 1 & 1 & 1 & 1 & 1 &  1 & 1 & 1 & 1 & 0 \\
                1 & \w^{12} & \w^9 & \w^{14} & \w^3 & \w^{13} & \w^3 & \w^6 & \w^7 & \w^{9} & \w^{11} & \w^{12} & \w^{6} & 0 \\
                1 & \w^{13} & \w^{11} & \w^2 & \w^7 & \w^{4} & \w^{10} & \w^{14} & \w & \w^{5} & \w^{8} & \w^{10} & \w^{5} & 1 \\
        \end{array}
        \right) 
        \end{equation}   
        \hrulefill
      \end{figure*}

       \item With Line 15 of Algorithm \ref{alg.1}, we solve the system of linear equations 
       $$G_{\mathcal{B}}\cdot \diag({\bf y}) \cdot G_{\mathcal{A}}^T\cdot {\bf s}^T={\bf 0}$$
       and get ${\bf s}({\bf y})=\{\mu(1,1,\w^8,\w^4):~ \mu \in \F_{16}\}$.  
       Take ${\bf s}_0=(1,1,\w^8,\w^4)$, then we have  
       \begin{align*}
        {\bf a} & ={\bf s}_0 G_{\mathcal{A}} \\ & =(\w^5,\w^{12},\w^{13},\w^{13},\w,0,\w^5,\w^4,\w^{10},\w^8,\w^{13},\w,\w^5,0)
       \end{align*}
       {\rm and}
       $$Z=z({\bf a})=\{6,14\}.$$

       \item Based on Line 22 of Algorithm \ref{alg.1}, we further set 
       ${\bf x}=(0,0,0,0,0,x_6,0,0,0,0,0,0,0,x_{14})$ and 
       find that the system of linear equations  
       $$H_{\C_7(\SSS,\vvv,\infty)}{\bf x}^T=H_{\C_7(\SSS,\vvv,\infty)}{\bf y}^T$$ 
       has no solution. 
       According to Line 27 of Algorithm \ref{alg.1}, we immediately conclude that 
       the received vector ${\bf y}$ incurs more than $3$ errors. 
    \end{itemize}
\end{example}

\section{Concluding remarks}\label{sec.concluding remarks}

ESGRS codes $\C_k(\SSS,{\bf v},\infty)$ constitute a class of linear codes 
where each code is either a non-GRS MDS code or an NMDS code, 
both of which have significant applications in distributed storage systems and cryptography. 
Although various properties of these codes, including their dual codes, Schur squares, weight distributions, 
and (almost) self-dual properties, were investigated in \cite{LSZ2025,ADV2025},  
their decoding challenges have remained largely unexplored.

This paper aims to bridge this gap, i.e., to answer Problem \ref{prob2}. We fully examined the existence of 
$\ell$-ECPs $(\mathcal{A},\mathcal{B})$ for $\C_k(\SSS,{\bf v},\infty)$ and provided explicit constructions for 
$\mathcal{A}$ and $\mathcal{B}$, either via subcode relationships or through direct constructions, 
as established in Theorems \ref{th.ECP_of_C_k} and \ref{th.ECP_of_C_k222}. 
Note that these ECPs are composed of GRS codes and their variants, and are always defined over the base field. 
Based on these $\ell$-ECPs, we proposed a decoding algorithm for $\C_k(\SSS,{\bf v},\infty)$ that can uniquely 
correct up to $\ell$ errors in $O(n^3)$ time, as detailed in Theorem \ref{th.decoding ECP} and Algorithm \ref{alg.1}. 
As discussed in Subsection \ref{subsec.complexity}, our decoding algorithm is more efficient than some alternative approaches. 

We also highlight several research directions that warrant further exploration, summarized as follows. 

\begin{enumerate}
    \item [1)] In Subsection \ref{subsec.complexity}, we mentioned two potential approaches to further reduce the decoding complexity 
    of Algorithm \ref{alg.1}.  
Therefore, it would be interesting to design $\SSS$ and ${\bf v}$ such that the above two approaches can be reached 
for some special error patterns. 

\item [2)] Motivated by Remark \ref{rem.SGRS}, it is of interest to investigate the conditions under which ESGRS codes 
coincide with SGRS codes, and when they diverge. 
Understanding these conditions is useful both for directly deriving decoding algorithms for SGRS codes 
and for constructing new families of non-GRS MDS codes. 
\end{enumerate}

\vfill


\begin{thebibliography}{99}
\bibliographystyle{IEEEtran}

\bibitem{ADV2025}
K.~Abdukhalikov, C.~Ding, and G.~K.~Verma,
``Some constructions of non-generalized Reed-Solomon MDS codes,'' {\em ArXiv preprint}, 2025.
 [Online].
Available: https://arxiv.org/abs/2506.04080 

\bibitem{BBPR2018}
P.~Beelen, M.~Bossert, S.~Puchinger, and J.~Rosenkilde,
``Structural properties of twisted Reed-Solomon codes with applications to cryptography,''
in \emph{Proc. IEEE Int. Symp. Inf. Theory (ISIT)}, pp.~946-950, 2018.

\bibitem{BPN2017}
P.~Beelen, S.~Puchinger, and J.~R.~Nielsen,
``Twisted Reed-Solomon codes,''
in \emph{Proc. IEEE Int. Symp. Inf. Theory (ISIT)}, pp.~336-340, 2017.

\bibitem{BPR2022}
P.~Beelen, S.~Puchinger, and J.~Rosenkilde,
``Twisted Reed-Solomon codes,''
\emph{IEEE Trans. Inf. Theory}, vol.~68, no.~5, pp.~3047-3061, 2022.

\bibitem{CHL2011}
V.~R.~Cadambe, C.~Huang, and J.~Li,
``Permutation code: Optimal exact-repair of a single failed node in MDS code based distributed storage systems,''
in \emph{Proc. IEEE Int. Symp. Inf. Theory (ISIT)}, pp.~1225-1229, 2011.

\bibitem{C2024}
H.~Chen,
``Many non-Reed-Solomon type MDS codes from arbitrary genus algebraic curves,''
\emph{IEEE Trans. Inf. Theory}, vol.~70, no.~7, pp.~4856-4864, 2024.

\bibitem{CP2020}
A.~Couvreur and I.~Panaccione,
``Power error locating pairs,''
\emph{Des. Codes Cryptogr.}, vol.~88, no.~8, pp.~1561-1593, 2020.

\bibitem{DK1994}
I.~Duursma and R.~K\"otter,
``Error-locating pairs for cyclic codes,''
\emph{IEEE Trans. Inf. Theory}, vol.~40, no.~4, pp.~1108-1121, 1994.

\bibitem{FF2018}
W.~Fang and F.-W.~Fu,
``Two new classes of quantum MDS codes,''
\emph{Finite Fields Appl.}, vol.~53, pp.~85-98, 2018.

\bibitem{G2001}
V.~Guruswami,
\emph{List Decoding of Error-Correcting Codes},
Ph.D. dissertation, MIT, Cambridge, MA, USA, 2001.

\bibitem{HF2023}
D.~Han and C.~Fan,
``Roth-Lempel NMDS codes of non-elliptic-curve type,''
\emph{IEEE Trans. Inf. Theory}, vol.~69, no.~9, pp.~5670-5675, 2023.

\bibitem{HZ2024}
D.~Han and H.~Zhang,
``Explicit constructions of NMDS self-dual codes,''
\emph{Des. Codes Cryptogr.}, vol.~92, no.~9, pp.~3573-3585, 2024.

\bibitem{HL2023DM}
B.~He and Q.~Liao,
``The error-correcting pair for TGRS codes,''
\emph{Discrete Math.}, vol.~346, no.~9, Art.~no.~113497, 2023.

\bibitem{HL2024}
B.~He and Q.~Liao,
``The properties and the error-correcting pair for lengthened GRS codes,''
\emph{Des. Codes Cryptogr.}, vol.~92, no.~1, pp.~211-225, 2024.

\bibitem{HYNL2021}
D.~Huang, Q.~Yue, Y.~Niu, and X.~Li,
``MDS or NMDS self-dual codes from twisted generalized Reed-Solomon codes,''
\emph{Des. Codes Cryptogr.}, vol.~89, no.~9, pp.~2195-2209, 2021.

\bibitem{HYN2023}
D.~Huang, Q.~Yue, and Y.~Niu,
``MDS or NMDS LCD codes from twisted Reed-Solomon codes,''
\emph{Cryptogr. Commun.}, vol.~15, no.~2, pp.~221-237, 2023.

\bibitem{HP2003}
W.~C.~Huffman and V.~Pless,
\emph{Fundamentals of Error-Correcting Codes}.
Cambridge, U.K.: Cambridge Univ. Press, 2003.

\bibitem{JYS2025}
X.~Jia, Q.~Yue, and H.~Sun,
``Coding properties and automorphism groups of two classes of twisted generalized Reed-Solomon codes,''
\emph{Des. Codes Cryptogr.}, vol. 93, no. 8, pp. 3107–3133, 2025. 

\bibitem{JMXZ2024}
L.~Jin, L.~Ma, C.~Xing, and H.~Zhou,
``New families of non-Reed-Solomon MDS codes,'' 
\emph{IEEE Trans. Inf. Theory}, 2025. [Online].
Available: https://doi.org/10.1109/TIT.2025.3647222

\bibitem{K1992}
R.~K\"otter,
``A unified description of an error-locating procedure for linear codes,''
in \emph{Proc. Algebraic Combin. Coding Theory (ACCT)}, pp.~113-117, 1992.

\bibitem{LR2020}
J.~Lavauzelle and J.~Renner,
``Cryptanalysis of a system based on twisted Reed-Solomon codes,''
\emph{Des. Codes Cryptogr.}, vol.~88, no.~7, pp.~1285-1300, 2020.

\bibitem{LCCN2025}
F.~Li, Y.~Chen, H.~Chen, and Y.~Niu,
``Non-Reed-Solomon type cyclic MDS codes,''
\emph{IEEE Trans. Inf. Theory}, vol.~71, no.~5, pp.~3489-3496, 2025.

\bibitem{LW2008}
J.~Li and D.~Wan,
``On the subset sum problem over finite fields,''
\emph{Finite Fields Appl.}, vol.~14, no.~4, pp.~911-929, 2008.

\bibitem{LELL2025}
Y.~Li, M.~F.~Ezerman, H.~Lao, and S.~Ling,
``Properties and decoding of twisted GRS codes and their extensions,'' 
{\em ArXiv preprint}, 2025. [Online].
Available: https://arxiv.org/abs/2508.02382 

\bibitem{LSZ2025}
Y.~Li, Z.~Sun, and S.~Zhu,
``A family of linear codes that are either non-GRS MDS codes or NMDS codes,''
\emph{IEEE Trans. Commun.}, vol.~73, no.~12, pp.~13073-13086, 2025.

\bibitem{LZS2025}
Y.~Li, S.~Zhu, and Z.~Sun,
``Covering radii and deep holes of two classes of extended twisted GRS codes and their applications,''
\emph{IEEE Trans. Inf. Theory}, vol.~71, no.~5, pp.~3516-3530, 2025.

\bibitem{M1978}
R.~J.~McEliece,
``A public-key cryptosystem based on algebraic coding theory,''
\emph{DSN Prog. Rep.}, vol.~44, pp.~114-116, 1978.

\bibitem{lem_ECP}
I.~M\'arquez-Corbella and R.~Pellikaan,
``A characterization of MDS codes that have an error-correcting pair,''
\emph{Finite Fields Appl.}, vol.~40, pp.~224-245, 2016.

\bibitem{P1992}
R.~Pellikaan,
``On decoding by error-location and dependent sets of error positions,''
\emph{Discrete Math.}, vol.~106, no.~1-3, pp.~369-381, 1992.

\bibitem{P1996}
R.~Pellikaan,
``On the existence of error-correcting pairs,''
\emph{J. Stat. Plan. Inference}, vol.~51, no.~2, pp.~229-242, 1996.

\bibitem{PM2017}
R.~Pellikaan and I.~M\'arquez-Corbella,
``Error-correcting pairs for a public-key cryptosystem,''
\emph{J. Phys. Conf. Ser.}, vol.~855, no.~1, Art.~no.~012032, 2017.

\bibitem{S2017}
N.~Sendrier,
``Code-based cryptography: State of the art and perspectives,''
\emph{IEEE Secur. Privacy}, vol.~15, no.~4, pp.~44-50, 2017.

\bibitem{ST2013}
K.~Sakakibara and J.~Taketsugu,
``Application of random relaying of partitioned MDS codeword block to persistent relay CSMA over random error channels,''
in \emph{Proc. 5th Int. Congr. Ultra Modern Telecommun. Control Syst. Workshops (ICUMT)}, pp.~106-112, 2013.

\bibitem{SV2018}
D.~E.~Simos and Z.~Varbanov,
``MDS codes, NMDS codes and their secret-sharing schemes,''
in \emph{Proc. 18th Int. Conf. Applications of Computer Algebra (ACA)}, Sofia, Bulgaria, 2012.

\bibitem{SY2023}
J.~Sui and Q.~Yue,
``Twisted Goppa codes with an efficient decoding algorithm and quasi-cyclic properties,''
\emph{IEEE Trans. Inf. Theory}, vol.~69, no.~9, pp.~5660-5669, 2023.

\bibitem{SYLH2022}
J.~Sui, Q.~Yue, X.~Li, and D.~Huang,
``MDS, near-MDS or 2-MDS self-dual codes via twisted generalized Reed-Solomon codes,''
\emph{IEEE Trans. Inf. Theory}, vol.~68, no.~12, pp.~7832-7841, 2022.

\bibitem{SYJL2024}
H.~Sun, Q.~Yue, X.~Jia, and C.~Li,
``Decoding algorithms of twisted GRS codes and twisted Goppa codes,''
\emph{IEEE Trans. Inf. Theory}, vol.~71, no.~2, pp.~1018-1027, 2025.

\bibitem{SD2023}
Z.~Sun and C.~Ding,
``The extended codes of a family of reversible MDS cyclic codes,''
\emph{IEEE Trans. Inf. Theory}, vol.~70, no.~7, pp.~4808-4822, 2023.

\bibitem{SDC2024}
Z.~Sun, C.~Ding, and T.~Chen,
``The extended codes of some linear codes,''
\emph{Finite Fields Appl.}, vol.~96, Art.~no.~102401, 2024.

\bibitem{TR2018}
A.~Thomas and B.~Rajan,
``Binary informed source codes and index codes using certain near-MDS codes,''
\emph{IEEE Trans. Commun.}, vol.~66, no.~5, pp.~2181-2190, 2018.

\bibitem{WLL2025}
G.~Wang, H.~Liu, and J.~Luo,
``Improved decoding algorithms for MDS and almost-MDS codes from twisted GRS codes,''
\emph{IEEE Trans. Inf. Theory}, vol.~71, no.~10, pp.~7688-7698, 2025.

\bibitem{WDC2023}
Y.~Wu, C.~Ding, and T.~Chen,
``When does the extended code of an MDS code remain MDS?''
\emph{IEEE Trans. Inf. Theory}, vol.~71, no.~1, pp.~263-272, 2025.

\bibitem{WHLD2024}
Y.~Wu, Z.~Heng, C.~Li, and C.~Ding,
``More MDS codes of non-Reed-Solomon type,'' 
{\em ArXiv preprint}, 2025.
 [Online].
Available: https://arxiv.org/abs/2401.03391 

\bibitem{XL2025}
R.~Xiao and Q.~Liao,
``An improvement for error-correcting pairs of some special MDS codes,''
\emph{Int. J. Found. Comput. Sci.}, vol.~36, no.~2, pp.~127-141, 2025.

\bibitem{ZZ2025}
Y.~Zhi and S.~Zhu,
``New MDS codes of non-GRS type and NMDS codes,''
\emph{Discrete Math.}, vol.~348, no.~5, Art.~no.~114436, 2025.

\bibitem{ZWXLQY2009}
Y.~Zhou, F.~Wang, Y.~Xin, S.~Luo, S.~Qing, and Y.~Yang,
``A secret sharing scheme based on near-MDS codes,''
in \emph{Proc. IEEE Int. Conf. Netw. Inf. Digit. Content (IC-NIDC)}, pp.~833-836, 2009.

\bibitem{ZL2024}
C.~Zhu and Q.~Liao,
``The (+)-extended twisted generalized Reed-Solomon code,''
\emph{Discrete Math.}, vol.~347, no.~2, Art.~no.~113749, 2024.







\end{thebibliography}
\end{document}